\newtheorem{proposition}{Proposition}[section]
\newtheorem{lemma}[proposition]{Lemma}
\newtheorem{definition}[proposition]{Definition}
\newtheorem{corollary}[proposition]{Corollary}
\newcommand{\Tr}{\protect{\text{\protect{Tr}}}}
\newcommand{\medstar}{\text{\small\FiveStarOpen}}
\numberwithin{equation}{section}
\title{Canonical \(\kappa\)-Minkowski Spacetime}
\author{Ludwik D\k abrowski\thanks{SISSA, Via Bonomea 265, 34136 Trieste, Italy},\quad Gherardo Piacitelli\({}^*\)}
\begin{document}
\maketitle

\begin{abstract}
A complete classification of the regular representations
of the relations
\[
[T,X_j]=\frac i\kappa X_j,\quad j=1,\dotsc,d
\]
is given. The quantisation of \(\mathbb R\times\mathbb R^d\) 
canonically (in the sense of Weyl) 
associated with the universal representation of the above relations
is intrinsically ``radial'', this meaning 
that it only involves
the time variable and the distance from the origin; angle variables remain
classical. 

The time axis through the origin is a spectral singularity of the
model: in the large scale limit it is
topologically disjoint from the rest.

The symbolic calculus is developed; in particular there is a 
trace functional on symbols.

For suitable choices of  states
localised very close to the 
origin, the uncertainties of all 
spacetime coordinates can be made simultaneously small at wish. On the 
contrary, uncertainty relations become important at ``large'' distances:
Planck scale effects should be visible at LHC energies, if processes
are spread in a region of size 1mm (order of peak nominal beam size) 
around the origin of spacetime.

\end{abstract}

\tableofcontents

\section{Introduction}

The \(\kappa\)-Minkowski spacetime, where the selfadjoint coordinates fulfil
\begin{equation}\label{eq:kappa_rel_gen}
[T,X_j]=\frac{i}{\kappa}X^j,\quad
[X_j,X_k]=0,\quad\quad j,k=1,\dotsc,d,
\end{equation}
has been analysed for almost 20 years \cite{{Lukierski:1991pn},Majid:1994cy}, 
mainly from the algebraic point of view. Here we will take natural units
where \(\kappa=1\), in addition to \(c=\hbar=1\).

Not much attention has been paid to representations, 
with the notable exception 
of the work of Agostini \cite{Agostini:2005mf}, 
where the representations in \(d=1\) space dimensions were constructed by 
means of the Jordan--Schwinger map, and classified by means of the theory 
of induced representations. 
In that paper (as well as in many others;
see especially \cite{Kosinski:1999dw}), Weyl operators were then defined by 
making an arbitrary choice in the order of operator products; the corresponding
quantisation then lacks the fundamental property of sending real functions
into selfadjoint operators.

Our approach is closer in spirit to that of Weyl \cite{weyl:1928}
and von Neumann \cite{vonNeumann:1931}.  After focusing on the 
appropriate regular commutation relations \`a la Weyl
\begin{equation}\label{eq:weyl_rel_intro}
e^{i\alpha T}e^{i\bm{\beta X}}=
e^{ie^{-\alpha}\bm\beta\bm X}e^{i\alpha T},
\quad \alpha\in \mathbb R,\bm\beta\in\mathbb R^d,
\end{equation}
we will show that the most general 
representation of \eqref{eq:weyl_rel_intro} is of the form
\begin{equation}
(T,X_1,\dotsc,X_d),\quad X_j=C_jR,
\end{equation}
where \(R\) is strictly positive, \(T\) has spectrum \(\mathbb R\), 
\begin{equation}\label{eq:kappa_rel_2dim}
[T,R]=iR,
\end{equation}
and each \(C_j=X_jR^{-1}\) 
is bounded, non negative, and strongly commute swith both \(T,R\).
Moreover, up to a unitary equivalence, the operators \(C_j\)
may be chosen such that 
\begin{equation}
E=\sum_iC_i^2
\end{equation} 
is an orthogonal projection. With this choice, 
\(\sum_jX_j^2=R^2E\).
In particular, the universal representation is the direct sum of a trivial 
and a non trivial component, corresponding to the two eigenspaces of \(E\).

The analogy with polar coordinates is evident; for this reason we say that
the quantisation is radial, since it only involves the commutation relations
between time and the space radius; the angular variables remain 
commutative, i.e.\ 
classical. Note that, if \(E\neq I\), then 
\(\mathbf 0\) is an isolated point of the joint spectrum
\(j\sigma(C_1,\dotsc,C_d)\); we may regard this
as the noncommutative shadow of the singularity of classical
radial coordinates in the origin of space. 
However, this singularity is not a consequence of some arbitrary choice of 
representations: it is built in the commutation relations, 
which define an  intrinsically radial model. 

The classification problem is so reduced to the case of
\(1+1\)
dimensions. The only regular irreducible representation 
of \eqref{eq:kappa_rel_2dim} with strictly positive \(R\) 
is (up to equivalence) 
\begin{equation}
T=P,\quad R=e^{-Q},
\end{equation}
where \(P,Q\) is a pair of Schr\"odinger operators, 
fulfilling \([P,Q]=-iI\). Although this was proved in \cite{Agostini:2005mf}
(under a different, though equivalent definition of regularity), 
we present a more direct, elementary proof which relies of von Neumann 
uniqueness. This proved, by Schur lemma the irreducible non trivial 
representations \((T^{(\bm c)}, \bm X^{(\bm c)})=(P,\bm ce^{-Q})\)
are labeled by vectors 
\(\bm c=(c_1,\dotsc,c_d)\in S^{d-1}=\{\bm c\in\mathbb R^d:\|\bm c\|=1\}\), while trivial
irreducible representations \((\tau,\bm 0)\), as operators on \(\mathbb C\),
are labeled by a real parameter \(\tau\).

Our subsequent 
discussion is based on the explicit computation of the 
radial Weyl operators
\begin{equation}
e^{i(\alpha T+\beta R)}=e^{i\alpha
  T}e^{i\frac{e^\alpha-1}{\alpha}
\beta R}.
\end{equation}
Indeed, together with the Weyl relations \eqref{eq:weyl_rel_intro},
they {\itshape realise} precisely 
the composition rule
described e.g.\ 
in \cite{Agostini:2002de} on the basis of the integration 
of the BCH series
done in \cite{Kosinski:1999dw}. Note that to formally apply the BCH formula to 
unbounded operators is roughly equivalent 
(through the theory of analytic vectors) to assuming that the 
representation is regular.

Contrary to the case of the CCR, where products of Weyl operators
are Weyl operators up to a phase only (the twist), the family
of radial Weyl operators \(e^{i(\alpha T+\beta R)}\) 
is a subgroup of the unitary group. Moreover, 
the correspondence with \(\mathbb R^2\) is bijective, so that \(\mathbb R^2\)
inherits a group law; the resulting group\footnote{This group plays a r\^ole
analogous to that of the Heisenberg group for the CCR. We refrain however from
calling it the \(\kappa\)-Heisenberg group because it is not a deformation
of the Heisenberg group; moreover our definition is slightly different from
the \(\kappa\)-Heisenberg group defined in \cite{Agostini:2005mf}.}~%
\(\mathfrak R\)
is isomorphic with the connected, simply connected 
Lie group whose real Lie algebra is generated by the relations
\([u,v]=-v\). Then the full Weyl operators \(e^{i(\alpha T+\bm{\beta X})}\)
form a group which is isomorphic with a 
central extension of the radial group \(\mathfrak R\). 

The natural (i.e.\ \`a la Weyl)  
prescription for the quantisation of a function \(f=f(t,\bm x)\) is
\begin{equation}\label{eq:intro_weyl_quant}
f(T,\bm X)=
\frac1{\sqrt{(2\pi)^{d+1}}}
\int\limits_{\mathbb R^{d+1}} 
d\alpha\;d\bm\beta\;\hat f(\alpha,\beta)e^{i\alpha T+\bm{\beta X}},
\end{equation}
where \(\hat f\) is the Fourier transform of 
\(f\in L^1(\mathbb R^d)\cap\widehat{L^1(\mathbb R^d)}\); we call such functions
symbols, for short.
It has all the necessary good properties: 
in particular, it sends real functions
into selfadjoint operators. 
No prescription deserves the name of quantisation
if not enjoying this property. Moreover, the 
quantisation
prescription (extended to the multiplier algebra)
sends plane waves \(e^{i(\alpha t+\bm{\beta x})}\)
precisely into the corresponding Weyl operators. 

The usual translation invariant Lebesgue 
measure \(d\alpha d\bm\beta\) which shows up
in the quantisation prescription \eqref{eq:intro_weyl_quant} is {\itshape not}
the Haar measure of the group of Weyl operators, which
is not even unimodular.
Notwithstanding this fact, 
the *-algebra of the symbols with product defined by
\begin{equation}\label{eq:star_prod}
f(T,\bm X)g(T,\bm X)=(f\star g)(T,\bm X),
\end{equation}
and pointwise conjugation as involution, reproduces precisely  the group 
algebra of the group of Weyl operators, up to (completion and) isomorphism.

From the interplay between the radial nature of the quantisation and the connection
with CCR quatisation, we obtain an unbounded linear funtional 
\begin{equation}
\tau_c(f)=\frac{2\pi^{d/2}}{\Gamma(d/2)}
\int dt\;d\bm x|\bm x|^{-d}f(t,\bm x)
\end{equation}
fulfilling
\[
\tau_c(f\star g)=\tau_c(g\star f),\quad \tau_c(\bar f\star f)\geqslant 0,
\]
which extends to an unbounded trace on the universal C*-algebra.

The universal C*-algebra of the algebra of symbols is 
\[
\mathcal A_d=\mathcal C(S^{d-1},\mathcal K)\oplus 
\mathcal C_\infty(\mathbb R), 
\]
where \(\mathcal C(S^{d-1},\mathcal K)\) is the algebra of the continuous 
functions of the sphere, with values in the algebra \(\mathcal K\)
of compact operators
on the separable, infinite dimensional Hilbert space; in other words, 
a trivial continuous field of C*-algebras on \(S^{d-1}\), with standard 
fibre \(\mathcal K\).
 
The picture is that each fibre \(\mathcal K\)
over the base point \(\bm c\in S^{d-1}\) describes the quantised open 
half plane which contains \(\bm c\) and is bounded by the time axis.
The time axis does not belong to any fibre, and remains classical: it 
is associated with the abelian  C*-algebra 
\(\mathcal C_\infty(\mathbb R)\) of continuous functions vanishing at infinity;
it arises from the trivial representations.

To say it differently, let \(\overset{\circ}{M}{}^{(1,d)}=
\mathbb R\times(\mathbb R^d-\{\bm 0\})\) be
the classical Minkowski spacetime with the time axis trhough the origin 
removed. Then \(\mathcal C(S^{d-1},\mathcal K)\) 
is to be thought of as the quantisation of 
\(\overset{\circ}{M}{}^{(1,d)}\), while \(\mathbb R\) remains classical.
This remains true for every value of \(\kappa\) and is thus bound
to survive the large scale limit \(\kappa\rightarrow\infty\).
The resulting large scale limit is indeed \(\mathbb R^{d+1}\) as a set, 
but equipped 
with a topology which makes the time axis topologically disconnected
from the rest. More precisely, it is the disjoint topological union 
\(\overset{\circ}{M}{}^{(1,d)}\sqcup\mathbb R\).

As a special case, the \(1+1\) dimensional 
\(\kappa\)-Minkowski spacetime corresponds to \(S^{0}=\{\pm 1\}\), and 
has C*-algebra
\[
\mathcal A_1=\mathcal K\oplus\mathcal C_\infty(\mathbb R)\oplus\mathcal K.
\]
In the large scale limit, the spacetime hase three disjoint 
connected components, corresponding to being on the left of the origin, 
in the origin, or on the right of the origin; see also 
\cite{Dabrowski:2009hv}.

The uncertainty relations are shortly discussed in section
\ref{sec:uncertainty}. We show that there are localisations
states which make both  the uncertainties \(\Delta T,\Delta X\)
arbitrarily small; these states are localised around the origin of space,
and at any time. This means that the 
noncommutative intrinsic limitations to localisation
arising in this model allow for 
localisation processes which in principle could transfer arbitrary high
energies to sharp localised regions of the geometric background 
by effect of localisation. This
is in plain contrast with the standard motivations underlying 
the quest of spacetime quantisation (see \cite{Doplicher:1994tu} for a particularly
careful discussion, where the probe is not implicitly assumed to posses 
spherical symmetry). On the other side, we provide  estimates on
the effects of noncommutativity at large scale.

\subsection*{Notations}
We choose natural units so that \(\kappa=1\), with the exception
of section \ref{sec:C_star}, where we discuss the large scale limit,
and of section \ref{sec:uncertainty}, where we discuss uncertainty relations. 

An important auxiliary r\^ole will be played by the Schr\"odinger
operators 
\[P=-id/ds,\quad Q=s\cdot
\] 
on their usual domain in \(L^2(\mathbb R)\).

We take the following conventions for Fourier 
transformations of functions of \(\mathbb R\times\mathbb R^d\):
\begin{subequations}
\begin{align}
\hat f(\lambda_1,\dotsc,\lambda_n)&=
(2\pi)^{-\frac{n}{2}}
\int dx_1\dotsm dx_n\;
f(x_1\dotsc,x_n)e^{-i(\lambda_1x_1+\dotsb+\lambda_nx_n)},\\
\check \varphi(x_1,\dotsc,x_n)&=
(2\pi)^{-\frac{n}{2}}
\int d\lambda_1\dotsm d\lambda_n\;
\varphi(\lambda_1\dotsc,\lambda_n)
e^{i(\lambda_1x_1+\dotsb+\lambda_nx_n)}.
\end{align}
\end{subequations}

Furthermore, we will write \(\mathscr F_j\) for the Fourier transform
in the \(j^{\text{th}}\) variable of a generic function:
\[
(\mathscr F_jf)(x_1,\dotsc,x_{j-1},\lambda,x_{j+1},\dotsc,x_n)=
\frac 1{\sqrt{2\pi}}\int dx_jf(x_1,\dotsc,x_n)e^{-i\lambda x_j},
\]
and analogously we define \(\mathscr F_{j_1,j_2,\dotsc,j_r}\),
for \(1\leqslant j_1<j_2<\dotsc<j_r\leqslant n\).

\section{Representations}\label{sec:rep}
We will begin by classifying the regular 
representations of the (formal) relations
\begin{equation}\label{eq:basic_rel}
[T,R]=iR.
\end{equation}

\begin{definition}
A pair \((T,R)\) of selfadjoint operators fulfilling
\begin{equation}\label{eq:def_reg_rep}
e^{i\alpha T}e^{i\beta R}=e^{i\beta e^{-\alpha}R}e^{i\alpha T},
\quad \alpha ,\beta \in\mathbb R,
\end{equation}
is said to fulfil \eqref{eq:basic_rel} in regular form.
\end{definition}
This definition is in agreement with a formal application of the BCH
formula. 

Let \(P,Q\) be a pair of Schr\"odinger operators. 
Then the pair \((P,e^{-Q})\)
provides a regular representation of \eqref{eq:basic_rel}, with \(R\)
positive. This can be checked directly, using that
\[
\left(e^{i\alpha P}\xi\right)(s)=\xi(s+\alpha ),\quad
\left(e^{i\beta e^{-Q}}\xi\right)(s)=e^{i\beta e^{-s}}\xi(s).
\]

A distinguished r\^ole will be played by trivial regular representations, 
namely those where the radius operator \(R\) is zero. By Schur's lemma,
irreducible trivial regular representations are one dimensional, in which case
\(T\) is a real number.


\begin{proposition}\label{prop:univ_AB}(i)
Any irreducible non trivial regular representation of \eqref{eq:basic_rel}
is equivalent to one of the following:
\begin{equation}
(T,R)=(P,\pm e^{-Q});
\end{equation}
in particular there is one only irreducible representation with \(R\) 
positive, up to equivalence.  Moreover, (ii) any trivial irreducible
representation is one dimensional, with \(T\) a real number.
\end{proposition}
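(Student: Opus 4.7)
The plan is to perform a direct reduction to the Stone--von Neumann uniqueness theorem via the change of variable $R=\pm e^{-Q}$. Part (ii) is immediate: if $R=0$ the only nontrivial datum is the selfadjoint operator $T$, whose spectral projections commute with everything in the representation; irreducibility then forces the underlying Hilbert space to be one-dimensional and $T$ to be a number $\tau\in\mathbb R$.

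For (i), rewrite \eqref{eq:def_reg_rep} as $e^{i\alpha T}e^{i\beta R}e^{-i\alpha T}=e^{i\beta(e^{-\alpha}R)}$: uniqueness of the infinitesimal generator of a strongly continuous one-parameter unitary group gives $e^{i\alpha T}Re^{-i\alpha T}=e^{-\alpha}R$, and the Borel functional calculus then yields
\[
e^{i\alpha T}g(R)e^{-i\alpha T}=g(e^{-\alpha}R)
\]
for every bounded Borel function $g$ on $\mathbb R$. Choosing $g=\chi_{(0,\infty)}$, $\chi_{\{0\}}$, $\chi_{(-\infty,0)}$ shows that the three sign projections of $R$ are fixed by every conjugation $\mathrm{Ad}\,e^{i\alpha T}$; since they also commute with $R$, they lie in the centre of the von Neumann algebra generated by $T$ and $R$, and so (by irreducibility) each is $0$ or $I$. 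Non triviality ($R\neq 0$) therefore forces either $R>0$ or $R<0$.

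Assume first that $R>0$, and set $Q':=-\log R$, selfadjoint by the functional calculus. Applying the displayed identity with $g(r)=e^{-i\beta\log r}$ on $(0,\infty)$ gives
\[
e^{i\alpha T}e^{i\beta Q'}e^{-i\alpha T}=e^{-i\beta\log(e^{-\alpha}R)}=e^{i\alpha\beta}e^{i\beta Q'},
\]
which is precisely the Weyl form of $[T,Q']=-iI$, with the same sign convention as the Schr\"odinger pair $(P,Q)$ introduced in the notations. Since $R$ and $Q'$ generate the same von Neumann algebra, $(T,Q')$ is still irreducible, so Stone--von Neumann supplies a unitary intertwining $(T,Q')$ with $(P,Q)$; exponentiating, $(T,R)\cong(P,e^{-Q})$. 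The case $R<0$ reduces to the previous one applied to $R':=-R>0$, which still fulfils \eqref{eq:def_reg_rep} (the regular form is invariant under $R\mapsto-R$), and yields $(T,R)\cong(P,-e^{-Q})$.

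The single delicate point is the passage from the Weyl relation \eqref{eq:def_reg_rep} to the unbounded intertwining $e^{i\alpha T}Re^{-i\alpha T}=e^{-\alpha}R$ and thence to the full Borel functional calculus identity; once this dictionary between $R$ and $Q'$ is in place, the classification reduces to an invocation of von Neumann's uniqueness theorem, with no appeal to the theory of induced representations.
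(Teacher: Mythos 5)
Your argument is correct and follows essentially the same route as the paper: derive $e^{i\alpha T}Re^{-i\alpha T}=e^{-\alpha}R$, pass to Borel functional calculus, take $-\log|R|$ to obtain the CCR Weyl relations, and invoke von Neumann uniqueness. The only (cosmetic) difference is that you dispose of the sign of $R$ at the outset via the centrality of its sign spectral projections, whereas the paper first fixes $Q=-\log|R|$ and then shows $C=\operatorname{sign}(R)=\pm I$ by the generalised Schur lemma.
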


\begin{proof} We already proved (ii). 
Let \((T,R)\) be a non trivial irreducible representation: 
we rewrite  the relations \eqref{eq:def_reg_rep}  in the form
\[
e^{i\alpha T}e^{i\beta R}e^{-i\alpha T}=e^{i\beta e^{-\alpha}R}.
\]
Holding \(\alpha\) fixed, the generator for the
resulting  group with parametre \(\beta\) fulfils
\begin{equation}\label{eq:generator_rel}
e^{i\alpha T}Re^{-i\alpha T}=e^{-\alpha}R.
\end{equation}
Consequently, for \(f\) a (Borel)  function of the spectrum of \(R\),
\[
f(e^{i\alpha T}Re^{-i\alpha T})=
e^{i\alpha T}f(R)e^{-i\alpha T}=f(e^{-\alpha}R).
\]
Since \((T,R)\) is irreducible and not trivial, 
\(0\) is {\itshape not} in the spectrum of \(R\) and 
we may apply the above remark to the function
\(f(x)=e^{-i\beta\log|x|}\), obtaining
\[
e^{i\alpha T}e^{i\beta(-\log|R|)}=e^{i\alpha\beta}e^{i\beta(-\log|R|)}
e^{i\alpha T},
\]
namely the Weyl relations for the CCR:
by von Neumann uniqueness \cite{vonNeumann:1931},
we may assume (up to equivalence) that \(T= P\) and
\(Q=-\log|R|\). 

Let  
\(C=\text{sign}(R)\), which commutes strongly with 
\(Q\). We rewrite again \eqref{eq:generator_rel}
in terms of \(T=P\), \(R=Ce^{-Q}\):
\[
e^{i\alpha P}Ce^{-Q}e^{-i\alpha P}=e^{-\alpha}Ce^{-Q}
\]
and, using \(e^{-Q}e^{-i\alpha P}=e^{-\alpha}e^{-i\alpha P}e^{-Q}\) and 
strict positivity of \(e^{-Q}\),
\[
e^{i\alpha P}C=Ce^{i\alpha P},
\]
namely \(C\) strongly commutes with \(P\), too. 
By the generalised Schur's lemma,
\(C\) is a multiple of the identity, \(C=\pm I\), and 
\(R= \pm e^{-Q}\).
\end{proof}

Since any real linear combination of pairwise strongly commuting selfadjoint
operators is essentially selfadjoint,  
let us introduce the following notations:
\[
\bm\beta=(\beta_1,\dotsc,\beta_d),\quad
\bm X=(X_1,\dotsc,X_d),\quad
\bm\beta\bm{X}=\left(\sum_j\beta_j X_j\right)^{**}.
\]
Moreover, \(\bm 0=(0,0,\dotsc,0)\), while 
\(\bm e_j\) denotes the usual canonical basis for \(\mathbb R^d\),
e.g. \(\bm e_d=(0,0,\dotsc,0,1)\).

The regular form of the
relations 
\begin{equation}\label{eq:more_dim_rels}
[T,X_j]=iX_j,\quad [X_j,X_k]=0
\end{equation}
among the selfadjoint operators \(T,X_1,\dotsc,X_d\) can be easily 
generalised:
\begin{definition} A set \((T,\bm X)\) of \(d+1\) selfadjoint
operators fulfilling
\begin{subequations}
\label{eq:genrel}
\begin{gather}
\label{eq:genrel_main}
e^{i\alpha T}e^{i\bm{\beta X}}=
e^{ie^{-\alpha}\bm{\beta X}}e^{i\alpha T},\quad
\quad \alpha\in \mathbb R,\bm\beta\in\mathbb R^d,\\
[e^{i\bm{\beta X}},e^{i\bm{\beta' X}}]=0,\quad
\quad \bm\beta,\bm\beta'\in\mathbb R^d,
\end{gather}
\end{subequations}
is said a regular representation of the relations \eqref{eq:more_dim_rels}.
\end{definition}

We will call a representation \((T,\bm X)\) trivial
if \(\bm X=0\). Irreducible trivial representations are then of the form
\(T=t,X_1=\dotsc =X_d=0\) as operators on the one dimensional Hilbert space
\(\mathbb C\). It is convenient to introduce the notation 
\begin{equation}
T^{(0)}=P,\bm X^{(0)}=\bm 0.
\end{equation}
Since
\(P\) may be replaced with \(Q\) by means of a canonical transformation, the
trivial representation \((T^{(0)},\bm X^{(0)})\) contains every irreducible 
trivial representation precisely once (up to equivalence).

To every \( \bm c=( c_1,\dotsc, c_d)
\in\mathbb R^d\backslash\{0\}\) there
is an irreducible regular representation
\[
T^{(\bm c)}=P,\quad X_j^{(\bm c)}= c_je^{-Q}
\]
where \(P,Q\) are the Schr\"odinger operators on \(L^2(\mathbb R)\).
It is clear that, since the length of \(\bm c\) may be rescaled
by means of a canonical transformation (see the proof below), 
we may restrict ourselves to \(|\bm c|=1\);
then, different choices of \(\bm c\) give inequivalent representations.
In other words, there is a family of pairwise inequivalent
representations labeled by the unit sphere \(S^{d-1}=
\{ \bm c\in\mathbb R^d:|\bm c|=1\}\). 

We now prove that there are no other irreducible representations, 
up to unitary equivalence.

\begin{proposition}
Let \((T,\bm X)\) be a non trivial irreducible regular representation of the
relations \eqref{eq:genrel}; 
then there is a unique \(\bm c\in S^{d-1}\) such that 
\begin{equation}
(T,\bm X)\simeq
(T^{(\bm c)},\bm X^{(\bm c)}).
\end{equation}
Moreover, the representation \((T^{(\bm 0)},\bm X^{(\bm 0)})\) contains
a representative for each class of trivial irreducible representations, 
without multiplicity.
\end{proposition}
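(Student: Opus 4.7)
The plan is to reduce the multi-dimensional classification to the one-dimensional statement of Proposition~\ref{prop:univ_AB} by showing that the joint spectral support of \(\bm X\) must be concentrated on a single half-line \(\{r\bm c:r\geqslant 0\}\subset\mathbb R^d\); this will force a factorisation \(\bm X=\bm c R\) with \(R\) positive, at which point \((T,R)\) is an irreducible regular representation of \([T,R]=iR\) and Proposition~\ref{prop:univ_AB} applies.

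First I would introduce the joint projection-valued spectral measure \(E\) on \(\mathbb R^d\) of the strongly commuting family \(X_1,\dotsc,X_d\); taking Borel generators in the regular relation \eqref{eq:genrel_main} gives
\[
e^{i\alpha T}f(\bm X)e^{-i\alpha T}=f(e^{-\alpha}\bm X),\quad\text{equivalently}\quad e^{i\alpha T}E(B)e^{-i\alpha T}=E(e^{\alpha}B),
\]
for every bounded Borel \(f\) and every Borel \(B\subseteq\mathbb R^d\). Next, for any Borel \(A\subseteq S^{d-1}\) the open cone \(\Omega_A=\{r\bm c:r>0,\bm c\in A\}\) is dilation-invariant, so \(E(\Omega_A)\) commutes with every \(e^{i\alpha T}\) and trivially with \(\bm X\); irreducibility and Schur's lemma then give \(E(\Omega_A)\in\{0,I\}\). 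The map \(A\mapsto E(\Omega_A)\) is therefore a \(\{0,1\}\)-valued Borel probability measure on the compact metric space \(S^{d-1}\), hence a point mass \(\delta_{\bm c}\) for a unique \(\bm c\in S^{d-1}\). The same Schur argument applied to the dilation-invariant singleton \(\{\bm 0\}\) forces \(E(\{\bm 0\})=0\), the alternative \(E(\{\bm 0\})=I\) being excluded by non-triviality. Thus \(E\) is carried by the ray \(\{r\bm c:r>0\}\), and via functional calculus \(X_j=c_jR\) with \(R:=|\bm X|\) strictly positive.

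Specialising \eqref{eq:genrel_main} to \(\bm\beta=\beta\bm c\) and using \(|\bm c|=1\) yields \(e^{i\alpha T}e^{i\beta R}e^{-i\alpha T}=e^{i\beta e^{-\alpha}R}\), so \((T,R)\) is a regular representation of \([T,R]=iR\); since the von Neumann algebra generated by \(T\) and \(R\) contains every \(X_j=c_jR\), it coincides with that generated by \((T,\bm X)\) and so \((T,R)\) is irreducible. As \(R>0\), Proposition~\ref{prop:univ_AB} supplies a unitary \(U\) with \(UTU^*=P\) and \(URU^*=e^{-Q}\), and conjugating termwise gives \((T,\bm X)\simeq(P,\bm c\,e^{-Q})=(T^{(\bm c)},\bm X^{(\bm c)})\).

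Uniqueness of \(\bm c\) follows from the fact that the joint spectrum of \(\bm X^{(\bm c)}=\bm c\,e^{-Q}\) is the closed ray \(\{r\bm c:r\geqslant 0\}\), a unitary invariant that distinguishes different points of \(S^{d-1}\). For the trivial claim, \((P,\bm 0)\) on \(L^2(\mathbb R)\) decomposes via the spectral theorem for \(P\) (Fourier transform) as the direct integral of the one-dimensional trivial representations \((t,\bm 0)\), \(t\in\mathbb R\), with multiplicity one since \(P\) has simple Lebesgue spectrum. The main obstacle is the combined Schur-and-scaling step: one must correctly transfer the Weyl-form relation to covariance of the joint spectral measure under dilations and then invoke the measure-theoretic fact that a \(\{0,1\}\)-valued Borel probability measure on a compact metric space is necessarily a Dirac mass; once this is in place, everything else is routine.
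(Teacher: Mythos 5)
Your argument is correct, and it arrives at the same reduction as the paper --- namely \(\bm X=\bm c R\) with \((T,R)\) an irreducible regular representation of \([T,R]=iR\), followed by an appeal to Proposition \ref{prop:univ_AB} --- but by a genuinely different route for the central step. The paper first forms \(R=(\sum_jX_j^2)^{1/2}\), applies the generalised Schur lemma to the single dilation-invariant projection \(\chi_{(0,\infty)}(R)\) to separate the trivial from the non-trivial case, and then applies it again to the \emph{bounded} operators \(C_k=X_kR^{-1}\), which commute with \(T\), with \(R\) and with each other, so that \(C_k=c_kI\) directly; no joint spectral measure on \(\mathbb R^d\) is invoked. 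You instead transfer the Weyl relation to dilation covariance of the joint spectral measure of \(\bm X\), note that its restriction to dilation-invariant cones defines a \(\{0,1\}\)-valued Borel measure on \(S^{d-1}\), and use the standard fact that such a measure on a second-countable space is a Dirac mass. Your version costs a little more measure theory --- one should record that countable additivity of \(A\mapsto E(\Omega_A)\) as a scalar measure holds because a strongly convergent sum of projections each equal to \(0\) or \(I\) can contain at most one non-zero term --- but it buys an explicit description of the joint spectrum as the closed ray \(\{r\bm c:r\geqslant 0\}\), which delivers the uniqueness of \(\bm c\) as a unitary invariant, a point the paper disposes of only by an informal remark preceding the proposition. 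One cosmetic caveat, shared with the paper: \(E(\{\bm 0\})=0\) makes \(R\) injective and non-negative, but \(0\) may still lie in the continuous spectrum of \(R\); this is harmless, since the proof of Proposition \ref{prop:univ_AB} only needs \(\{0\}\) to be a null set for the spectral measure of \(R\). Your treatment of the trivial component (direct integral over the simple Lebesgue spectrum of \(P\)) matches the paper's.
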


\begin{proof}
With \(R^2=(X_1^2+\dotsc+X_d^2)^{**}\), let \(R=\int_0^{\infty} r\;dF(r)\) 
be the spectral resolution of \(R\), and \(E=F(\infty)-F(0)=
\chi_{(0,\infty)}(R)\), where \(\chi_{(0,\infty)}\) is the characteristic
function of the set \((0,\infty)\). By \eqref{eq:generator_rel}
\[
e^{i\alpha T}Ee^{-i\alpha T}=\chi_{(0,\infty)}(e^{-\alpha} R)=E,
\]
so that \(E\) commutes strongly both with \(R\) and \(T\). Hence, by the 
generalised Schur's lemma, either \(E=I\)
or \(E=0\). If \(E=0\), the representation is trivial: \(T\) 
is multiplication by a real number on a one dimensional Hilbert space 
and \(\bm X=\bm C=0\).
Otherwise \(R\) is invertible, and the  bounded operators
\[
C_k=X_kR^{-1},\quad k=1,\dotsc,d,
\]
strongly commute pairwise and with \(R\). By \eqref{eq:generator_rel}
and the properties of functional calculus,
\[
e^{i\alpha T}C_ke^{-i\alpha T}=
e^{i\alpha T}X_ke^{-i\alpha T}e^{i\alpha T}R^{-1}e^{-i\alpha T}=
(e^{-\alpha}X_k)(e^{-\alpha}R)^{-1}=C_k,
\]
so that \(C_k\) strongly commutes with \(T\), too.
Hence by Schur's lemma \(C_k=c_kI\) for some \(c_k\), and \(X_k=c_kR\). 

Since the representation is not trivial, there is at least some \(c_j\neq 0\):
thus \eqref{eq:genrel_main} written for \(\bm\beta=\beta\bm e_j\) 
gives precisely \eqref{eq:def_reg_rep}. It follows that (up to equivalence)
\(T=P\), \(R=e^{-Q}\), by positivity of \(R\) and 
proposition \ref{prop:univ_AB}. 

With \(|\bm c|^2=\sum_j|c_j|^2\), the unitary
operator \(U=e^{iP\log|\bm c|}\) fulfils
\(
Ue^{-Q}U^*=(1/|\bm c|)e^{-Q},
\)
so that we may assume \(\bm c\in S^{d-1}\), the unit sphere.
\end{proof}

Note that, for \(d=1\), \(S^0=\{\pm 1\}\), so that there are two only
equivalence classes of non trivial, irreducible representations: 
\((P,\pm e^{-Q})\). We thus recovered the special case discussed in 
\cite{Agostini:2005mf,Dabrowski:2009hv}.

\begin{definition}\label{def:univ_rep}
Let \(d\bm c\) be the rotation invariant Lebesgue measure on 
\(S^{(d-1)}\). 
Let \(d\mu(\bm c)\) be the measure on
\[
S^{d-1}\sqcup\{\bm 0\},
\]
defined by \(\int d\mu(\bm c)f(\bm c)=f(\bm 0)+\int d\bm cf(\bm c)\). The 
universal representation of the relations \eqref{eq:genrel} is
%
\begin{subequations}
\begin{align}
T^u&= I\otimes P,\\ 
R^u&= I\otimes e^{-Q},\\
C_j^u&= (c_j\cdot)\otimes I,\\ 
X_j^u&= C_j^uR^u=(c_j\cdot)\otimes e^{-Q},
\end{align}
on the Hilbert space 
\begin{equation}
\mathfrak H^u=
L^2(S^{d-1}\sqcup\{\bm 0\},d\mu(\bm c))\otimes L^2(\mathbb R),
\end{equation}
\end{subequations}
where \(c_j\cdot\) is the operator of multiplication by \(c_j\).
\end{definition}
By construction, the above representation contains precisely one 
representative for every equivalence class of irreducible representations;
for this reason we called it universal. By taking  a suitable amplification,
we can easily obtain a representation which is covariant under a unitary 
representation of the group \(G_d\)
of orthogonal space transformations, time translations, and space dilations.
\begin{proposition}\label{prop:cov_coord}
Let \(G_d=O(\mathbb R^d)\times\mathbb R\times(0,\infty)\) be the Kronecker 
product of the orthogonal group, the additive group \(\mathbb R\) and
the multiplicative group \((0,\infty)\), so that 
\begin{equation}
(A_1,a_1,\lambda_1)(A_2,a_2,\lambda_2)=(A_1A_2,a_1+\lambda_1a_2,\lambda_1\lambda_2),
\quad (A_j,a_j,\lambda_j)\in G_d.
\end{equation}
There exists a \(G_d\)-covariant representation, namely a strongly 
continuous unitary representation \(U\) of \(G_d\) and a representation
\((T,\bm X)\) such that, for every 
\((A=(a_{jk}),a,\lambda)\in G_d\), 
\begin{subequations}
\begin{align}
  U(A,a,\lambda)^{-1}  X_j  U(A,a,\lambda)&=\lambda\sum_ka_{jk}  X_k,\quad j=1,\dotsc,k,\\
  U(A,a,\lambda)^{-1}  T  U(A,a,\lambda)&=  T+aI.
\end{align}
\end{subequations}
\end{proposition}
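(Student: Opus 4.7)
Build the covariant representation by amplifying the universal representation of Definition \ref{def:univ_rep}: work on \(\tilde{\mathfrak H} = \mathfrak H^u \otimes L^2(\mathbb R)\), take the generators to be \(T = T^u \otimes I\) and \(X_j = X^u_j \otimes I\), and assemble \(U(A,a,\lambda)\) as a tensor product of three mutually commuting pieces, one attached to each factor of \(G_d\). The covariance relations will then follow by direct computation from the Weyl relations for \((P,Q)\).

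The rotational piece is the natural pullback \((V(A)f)(\bm c) = f(A^{-1}\bm c)\) on the first factor \(L^2(S^{d-1}\sqcup\{\bm 0\})\). This is unitary by \(O(d)\)-invariance of \(d\mu\) (combining the invariant measure \(d\bm c\) on \(S^{d-1}\) with the fixed point \(\bm 0\)), and a short calculation yields \(V(A)^{-1}(c_j\cdot)V(A) = \sum_k a_{jk}(c_k\cdot)\), reproducing the required rotation law on \(\bm X\) while leaving \(T\) alone.

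On the second tensor factor, time translation by \(a\) is implemented by \(e^{iaQ}\), which by the Weyl relations shifts \(P = T\) by \(aI\) and leaves \(e^{-Q}=R\) invariant; dilation by \(\lambda\) is implemented by \(e^{i\log\lambda\cdot P}\), which shifts \(Q\) by \(-\log\lambda\), hence scales \(R\) by \(\lambda\), and leaves \(T\) invariant. Combined with \(V(A)\) and amplified trivially on the auxiliary factor, these give precisely the conjugation laws for \(T\) and \(\bm X\) demanded by the statement.

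The main obstacle is that \(e^{iaQ}\) and \(e^{i\log\lambda\cdot P}\) do not commute: the Weyl relations produce the cross phase \(e^{ia_2\log\lambda_1}\) when one composes \((a_1,\lambda_1)\) with \((a_2,\lambda_2)\), so the product on the second factor alone is merely projective, and the resulting 2-cocycle cannot be removed by a scalar phase. The role of the auxiliary \(L^2(\mathbb R)\) is precisely to absorb this obstruction: equip it with a second Schr\"odinger pair \((P',Q')\) and set
\[
U(A,a,\lambda) = V(A)\otimes e^{iaQ}e^{i\log\lambda\cdot P}\otimes e^{iaP'}e^{i\log\lambda\cdot Q'}.
\]
The swap of position and momentum roles on the auxiliary factor produces the inverse cross phase \(e^{-ia_2\log\lambda_1}\), the two cocycles cancel, and \(U\) is a genuine strongly continuous unitary representation of \(G_d\); since \((P',Q')\) commutes with \((T,\bm X)\), the covariance computed in the previous paragraph is unchanged. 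Strong continuity is immediate from Stone's theorem applied to \(P,Q,P',Q'\) together with the strong continuity of \(A\mapsto V(A)\).
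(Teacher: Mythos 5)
Your construction is correct, but it takes a genuinely different route from the paper's. The paper avoids the cocycle problem you identify by never creating it: it modifies the space coordinates to \(X_j=C_j\otimes e^{-Q}\otimes e^{-Q}\) on \(\mathfrak H^u\otimes L^2(\mathbb R)\), so that time translations are implemented by \(e^{iaQ}\) acting only on the second tensor factor and dilations by \(e^{i(\log\lambda)P}\) acting only on the third; the two implementers then commute trivially and \(U=V(A)\otimes e^{iaQ}\otimes e^{i(\log\lambda)P}\) is manifestly a representation. The price is that \((T,\bm X)\) is no longer literally an amplification of the universal representation (the radial part is ``doubled''). You instead keep \(X_j=X_j^u\otimes I\), accept the Heisenberg-type cocycle \(e^{ia_2\log\lambda_1}\) produced by composing \(e^{iaQ}e^{i(\log\lambda)P}\) on the second factor --- and you are right that it is cohomologically nontrivial, its antisymmetric part being the symplectic form on \(\mathbb R^2\) in the coordinates \((a,\log\lambda)\) --- and then cancel it by tensoring with the conjugate projective representation \(e^{iaP'}e^{i(\log\lambda)Q'}\) on the auxiliary factor; this is the standard trick of pairing a \(\sigma\)-representation with a \(\bar\sigma\)-representation, your sign bookkeeping is right, and the covariance computations go through since the auxiliary pair commutes with \((T,\bm X)\). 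One caveat applies equally to both proofs: each yields a representation of the \emph{direct} product, \(U(g_1)U(g_2)=U(A_1A_2,a_1+a_2,\lambda_1\lambda_2)\), whereas the proposition writes the semidirect law \(a_1+\lambda_1a_2\). That law is in fact incompatible with the stated covariance relations (iterating \(U(g)^{-1}TU(g)=T+aI\) forces additive composition of the \(a\)'s), so this is a defect of the statement rather than of your argument.
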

\begin{proof}
Take
\begin{align*}
T&=I\otimes P\otimes I,\\
X_j&=C_j\otimes e^{-Q}\otimes e^{-Q},\\
\end{align*}
on \(\mathfrak H^u\otimes L^2(\mathbb R)\), and
\[
U(A,a,\lambda)\eta\otimes\xi\otimes\xi'=\eta(A^{-1}\cdot)\otimes e^{iaQ}\otimes e^{i(\log\lambda)P}.
\]
\end{proof}

\section{Radial Weyl Operators, Quantisation and Trace}
\label{sec:weyl}

According to the discussion of section \ref{sec:rep}, the 
quantisation will take place (in the non trivial component)
in the radial directions labeled 
by vectors \(\bm c\in S^{d-1}\). Hence, we discuss preliminarly
the quantisation corresponding to the operators \(T=P,R=e^{-Q}\).
Our first task is to compute the Weyl operators \(e^{i(\alpha T+\beta R)}\).

\begin{proposition}\label{prop:eureka}
Let \(T,R\) be selfadjoint operators on some Hilbert space,
fulfilling \eqref{eq:def_reg_rep} with \(R>0\). 
Then, for every \((\alpha,\beta)\in\mathbb R^2\),
the operator \(\alpha T+\beta R\) is essentially
selfadjoint and fulfils 
\begin{equation}
e^{i\alpha T+\beta R}=
e^{i\alpha T}e^{i\frac{e^{\alpha} -1}{\alpha }\beta R}.
\end{equation}
\end{proposition}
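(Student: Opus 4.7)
The plan is to produce a strongly continuous one-parameter unitary group whose infinitesimal generator is the closure of $\alpha T+\beta R$, and then invoke Stone's theorem. Writing $f(t)=(e^{t\alpha}-1)/\alpha$ (interpreted as $t$ when $\alpha=0$), I set
\[
V(t)=e^{it\alpha T}\,e^{if(t)\beta R},\qquad t\in\mathbb R.
\]

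The first step is to check the group law $V(s)V(t)=V(s+t)$. Rearranging \eqref{eq:def_reg_rep} as $e^{iuR}e^{is\alpha T}=e^{is\alpha T}e^{iue^{s\alpha}R}$ lets me commute the two middle factors in $V(s)V(t)$; the two resulting exponentials in $R$ combine via the functional calculus of $R$, and the new coefficient $f(s)e^{t\alpha}+f(t)$ equals $f(s+t)$ by the elementary identity $(e^{s\alpha}-1)e^{t\alpha}+(e^{t\alpha}-1)=e^{(s+t)\alpha}-1$. Strong continuity of $V$ follows from strong continuity of the two factor flows and continuity of $f$, so Stone's theorem provides a self-adjoint $H$ with $V(t)=e^{itH}$; evaluation at $t=1$ will yield the claimed identity once $H=\overline{\alpha T+\beta R}$ is established.

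To identify $H$ I would reduce to the canonical model. By Proposition~\ref{prop:univ_AB} together with the standard direct-integral decomposition, any regular representation with $R>0$ is a multiple of $(P,e^{-Q})$ on $L^2(\mathbb R)$; since both sides of the desired identity are preserved by unitary equivalence and by direct integrals, I may assume $T=P$, $R=e^{-Q}$. There the explicit action
\[
(V(t)\xi)(s)=e^{if(t)\beta e^{-(s+t\alpha)}}\,\xi(s+t\alpha)
\]
shows that $V$ preserves $C_c^\infty(\mathbb R)$, a subspace which is dense, lies in $D(P)\cap D(e^{-Q})$, and on which differentiation at $t=0$ gives $(d/dt)V(t)\xi|_{t=0}=i(\alpha P+\beta e^{-Q})\xi$. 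Because this subspace is $V$-invariant and dense it is a core for the Stone generator, so $\overline{\alpha T+\beta R}=H$.

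The main obstacle is precisely this last identification. Abstractly, a dense subspace contained in $D(T)\cap D(R)$ that is also invariant under $V$ is not obvious: $R=e^{-Q}$ is unbounded from above, so Schwartz functions need not lie in its domain, and common analytic-vector arguments for $T$ alone are insufficient. The reduction to the canonical realisation, where $C_c^\infty(\mathbb R)$ serves uniformly as an invariant common core, is what makes the argument go through cleanly.
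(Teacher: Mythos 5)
Your argument is correct and is precisely the ``standard application'' the paper alludes to when it omits the proof citing \cite[Theorem VIII.10]{reed-simon_1}: verify the one-parameter group law for \(V(t)=e^{it\alpha T}e^{if(t)\beta R}\) with \(f(t)=(e^{t\alpha}-1)/\alpha\) via the Weyl relations, exhibit an invariant dense domain of differentiability (here \(C_c^\infty(\mathbb R)\) after reducing to a multiple of \((P,e^{-Q})\)), and identify the Stone generator with \(\overline{\alpha T+\beta R}\). The details you supply, including the reduction to the canonical model to obtain the invariant common core, fill in exactly what the paper leaves out.
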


Once the right ansatz has been guessed, the proof is a standard application of
the  Stone--von Neumann theorem \cite[theorem VIII.10]{reed-simon_1}, 
which we omit. Indeed,
we find it more instructive to describe a method for finding the 
right ansatz, which does not rely on a formal application of the BCH formula. 


Assume that there there is a common dense domain \(\mathscr X\) on which
\(\alpha T+\beta X\) is essentially selfadjoint for every \(\alpha,\beta\).
The operators \(W(\alpha,\beta):=e^{i\alpha T+\beta X}\) should fulfill the
following properties:
\begin{gather}
\label{eq:W_1_groups}W(\alpha ,0)=e^{i\alpha T},\quad W(0,\beta )=e^{i\beta X},\\
\label{eq:W_unitarity} W(\alpha ,\beta )^{-1}=W(\alpha ,\beta )^*,\\
\label{eq:W_1_group} W(\lambda \alpha ,\lambda \beta )W({\lambda'} \alpha ,{\lambda'} \beta )=
W((\lambda+{\lambda'})\alpha ,(\lambda+{\lambda'}) \beta )
\end{gather}
identically for \(\alpha ,\beta ,\lambda,{\lambda'}\in\mathbb R\).
To solve the above problem, we took 
the following ansatz:
\[
W(\alpha ,\beta )=e^{ir(\alpha ,\beta )T}e^{is(\alpha ,\beta )X}.
\]
Some little effort leads to the given solution.

We now discuss the properties of the map
\[
f\mapsto f(T,R)=\frac1{2\pi}\int d\alpha\;d\beta\;\hat f(\alpha,\beta)
e^{i(\alpha T+\beta R)},
\]
defined on the class 
\(L^1(\mathbb R^2)\cap\widehat{L^1(\mathbb R^2)}\) of symbols, where
\[
\hat f(\alpha,\beta)=\frac1{2\pi}\int dt\;dr\;f(t,r)e^{-i(\alpha t+\beta r)}.
\]
In this section we always will take \(T=P,R=e^{-Q}\) on \(L^2(\mathbb R)\). 

With the explicit action 
\[
(e^{i(\alpha P+\beta e^{-Q})}\xi)(s)=e^{i\frac{1-e^{-\alpha}}{\alpha}\beta e^{-s}}
\xi(s+\alpha),
\]
standard computations yield
\[
(f(T,R)\xi)(s)=\int du\;K_f(s,u)\xi(u),
\]
where
\begin{equation}\label{eq:kernel_kappa}
K_f(s,u)=(\mathscr F_1 f)\left(u-s,\frac{e^{-s}-e^{-u}}{u-s}\right);
\end{equation}
here above, \(\mathscr F_1\) denotes the Fourier transform 
in the first variable (for conventions, see the end of the introduction). 

Inspection of \eqref{eq:kernel_kappa} gives immediately  
\begin{lemma}\label{lm:only_right_of_f}
Let \(f_1,f_2\in L^1(\mathbb R^2)\cap\widehat{L^1(\mathbb R^2)}\). We have
\[
f_1(T,R)=f_2(T,R)
\]
if and only if
\[
f_1(t,r)=f_2(t,r),\quad t\in\mathbb R, r\in (0,\infty).
\]
\end{lemma}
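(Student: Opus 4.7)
The plan is to reduce to a linearity statement and then study the map that appears in the kernel formula \eqref{eq:kernel_kappa}. By linearity, setting $f=f_1-f_2$, it suffices to prove that $f(T,R)=0$ if and only if $f$ vanishes on $\mathbb R\times(0,\infty)$.

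The easier direction is the ``if'': in the kernel formula
\[
K_f(s,u)=(\mathscr F_1 f)\Bigl(u-s,\tfrac{e^{-s}-e^{-u}}{u-s}\Bigr),
\]
the second argument $\rho(s,u)=(e^{-s}-e^{-u})/(u-s)$ is strictly positive for every $(s,u)\in\mathbb R^2$ (including $s=u$, by the convention of taking the limit). Hence, if $f(t,r)=0$ for all $t\in\mathbb R$, $r>0$, then $\mathscr F_1 f$ vanishes on $\mathbb R\times(0,\infty)$, so $K_f\equiv 0$ and $f(T,R)=0$.

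For the ``only if'' direction, suppose $f(T,R)=0$, so $K_f(s,u)=0$ for a.e.\ $(s,u)$. The key step is to understand the image of the map $\Phi(s,u)=(u-s,\rho(s,u))$. Setting $\tau=u-s$, one computes
\[
\rho(s,s+\tau)=e^{-s}\,\frac{1-e^{-\tau}}{\tau},
\]
where the factor $(1-e^{-\tau})/\tau$ is strictly positive for every $\tau\in\mathbb R$. Thus for each fixed $\tau$, as $s$ ranges over $\mathbb R$ the value $\rho$ ranges bijectively (and smoothly) over $(0,\infty)$. So $\Phi$ is a diffeomorphism from $\mathbb R^2$ onto $\mathbb R\times(0,\infty)$, with non-vanishing Jacobian. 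Pushing the a.e.\ vanishing of $K_f$ through $\Phi^{-1}$ yields $(\mathscr F_1 f)(\tau,\rho)=0$ for a.e.\ $(\tau,\rho)\in\mathbb R\times(0,\infty)$.

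To finish, by Fubini, for a.e.\ $\rho>0$ the function $\tau\mapsto(\mathscr F_1f)(\tau,\rho)$ vanishes a.e., so by injectivity of the one-dimensional Fourier transform $f(\cdot,\rho)=0$ a.e.\ for a.e.\ $\rho>0$; applying Fubini once more gives $f=0$ a.e.\ on $\mathbb R\times(0,\infty)$, which is the required conclusion (as identities of symbols are understood in the a.e.\ sense). The only point requiring care is the positivity of $(1-e^{-\tau})/\tau$ and the fact that $\Phi$ covers exactly the half-plane $\mathbb R\times(0,\infty)$; once this geometric observation is made, the rest is a routine change-of-variables argument combined with Fourier injectivity.
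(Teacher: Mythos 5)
Your proof is correct and is exactly the argument the paper intends: the paper dispatches this lemma with ``inspection of \eqref{eq:kernel_kappa}'', and your observation that \((s,u)\mapsto\bigl(u-s,\;(e^{-s}-e^{-u})/(u-s)\bigr)\) is a diffeomorphism of \(\mathbb R^2\) onto \(\mathbb R\times(0,\infty)\) is precisely the content of that inspection (the same change of variables reappears in the proof of lemma \ref{lm:HS}). The only step worth flagging explicitly is the passage from \(f(T,R)=0\) to \(K_f=0\) a.e., which holds because a bounded integral operator with locally integrable kernel vanishes iff its kernel vanishes a.e.; and since \(\hat f_i\in L^1\) the symbols have continuous representatives, so the a.e.\ conclusion upgrades to the pointwise statement of the lemma.
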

Equivalently, the restriction of the map \(f\mapsto f(T,R)\) to the symbols
\(f\) which are even in the second variable (namely 
\(f(t,\cdot)= f(t,|\cdot|)\) for all \(t\)'s) is injective.

Moreover,
\begin{lemma}\label{lm:HS}
Let \(f\in L^1(\mathbb R^2)\cap\widehat{L^1(\mathbb R^2)}\). If
\[
\int\limits_{r>0} dt\;dr\;\frac 1r|f(t,r)|^2<\infty,
\]
then the operator \(f(T,R)\) is Hilbert-Schmidt, with Schatten norm
\begin{equation}
\|f(T,R)\|_2=\left(\;\int\limits_{r>0} dt\;dr\;\frac 1r 
\left|f(t,r)\right|^2\right)^{1/2}.
\end{equation}
\end{lemma}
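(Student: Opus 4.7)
\medskip

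\noindent\emph{Proof proposal.} The plan is to read the Hilbert--Schmidt norm off directly from the integral kernel \eqref{eq:kernel_kappa} of $f(T,R)=\int du\,K_f(s,u)\xi(u)$ by means of a single change of variables.

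First I would write
\[
\|f(T,R)\|_2^2=\iint ds\,du\;\bigl|K_f(s,u)\bigr|^2
=\iint ds\,du\;\left|(\mathscr F_1 f)\!\left(u-s,\tfrac{e^{-s}-e^{-u}}{u-s}\right)\right|^2,
\]
and then change variables from $(s,u)$ to $(\alpha,r)$ defined by
\[
\alpha=u-s,\qquad r=\tfrac{e^{-s}-e^{-u}}{u-s}=e^{-s}\,g(\alpha),\qquad g(\alpha):=\tfrac{1-e^{-\alpha}}{\alpha}.
\]
The function $g$ is smooth and strictly positive on $\mathbb R$ (with $g(0)=1$), so this map is a diffeomorphism from $\mathbb R^2$ onto $\mathbb R\times(0,\infty)$: for fixed $\alpha$, $s\mapsto e^{-s}g(\alpha)$ is a bijection $\mathbb R\to(0,\infty)$, and then $u=s+\alpha$ is determined. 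The key computation is the Jacobian, which I expect to be exactly $r$: a direct $2\times 2$ determinant gives
\[
\left|\det\frac{\partial(\alpha,r)}{\partial(s,u)}\right|
=\left|\det\begin{pmatrix}-1 & 1\\ -e^{-s}(g(\alpha)+g'(\alpha)) & e^{-s}g'(\alpha)\end{pmatrix}\right|
=e^{-s}g(\alpha)=r,
\]
so $ds\,du=\dfrac{d\alpha\,dr}{r}$. This is precisely what produces the factor $1/r$ in the final formula.

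Next, after the substitution,
\[
\|f(T,R)\|_2^2=\int_{\mathbb R}d\alpha\int_0^\infty\frac{dr}{r}\,\bigl|(\mathscr F_1 f)(\alpha,r)\bigr|^2.
\]
Applying Plancherel's theorem in the first variable (for a.e.\ $r>0$, $f(\cdot,r)$ is in $L^1\cap L^2$ under the stated hypothesis, so $\|\mathscr F_1 f(\cdot,r)\|_{L^2}=\|f(\cdot,r)\|_{L^2}$) and then Tonelli to reorder the remaining integrations yields
\[
\|f(T,R)\|_2^2=\int_0^\infty\frac{dr}{r}\int_{\mathbb R}dt\,|f(t,r)|^2,
\]
which is the claimed identity. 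The finiteness hypothesis then says exactly that $K_f\in L^2(\mathbb R^2)$, so $f(T,R)$ is Hilbert--Schmidt.

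The only real obstacle is the Jacobian/bijectivity step; once $g(\alpha)>0$ is observed the geometry of the change of variables is immediate, and Plancherel plus Tonelli do the rest. I would also remark that the restriction of the integration to $r>0$ is consistent with Lemma~\ref{lm:only_right_of_f}: values of $f$ at negative $r$ never enter $f(T,R)$, and correspondingly do not enter the norm formula.
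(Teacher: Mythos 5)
Your proof is correct and follows essentially the same route as the paper's: both compute the $L^2$-norm of the integral kernel $K_f$ via the substitution $\alpha=u-s$, $r=(e^{-s}-e^{-u})/(u-s)$ (whose Jacobian you, unlike the paper, verify explicitly to be $r$), then apply Plancherel in the first variable and invoke the standard characterisation of Hilbert--Schmidt operators by square-integrable kernels. The only cosmetic difference is that the paper first reduces to $f$ even in $r$ so as to phrase the Plancherel step as unitarity of $\mathscr F_1$ on $L^2(\mathbb R^2,|r|^{-1}dt\,dr)$, a normalisation you correctly bypass by working directly on $r>0$.
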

\begin{proof}
By lemma \ref{lm:only_right_of_f}, we may assume \(f(t,r)=f(t,-r)\) 
identically, without loss of generality. With the substitution
\[
\alpha=u-s,\quad r=\frac{e^{-s}-e^{-u}}{u-s},
\]
we find
\begin{align*}
\|K_f\|^2_{L^2(\mathbb R^2)}&=
\int du\;ds\; \left|(\mathscr F_1 f)\left(u-s,\frac{e^{-s}-e^{-u}}{u-s}\right)\right|^2=\\
&=\int\limits_{r>0} d\alpha\;dr\;\frac 1r 
\left|(\mathscr F_1 f)(\alpha,r)\right|^2=\int\limits_{r>0} dt\;dr\;\frac 1r 
\left|f(t,r)\right|^2.
\end{align*}
where we used that \(\mathscr F_1\) is unitary on 
\(L^2(\mathbb R^2,|r|^{-1}dt\;dr)\). The result then follows from classical
theorems (see e.g.\ \cite[Theorem VI.23]{reed-simon_1}).
\end{proof}

We next ask ourselves when \(f(T,R)\) has trace, and how to compute it.
We will give a somewhat indirect argument, which is of some
interest on its own. 

In some sense, by its very definition the operator \(f(T,R)\) 
appears as a ``function'' of the Schr\"odinger operators \(P,Q\). 
Hence, it is natural to expect (at least for a suitable subclass of 
symbols) 
that there exists a function \(g\) such that 
\(f(T,R)=g(P,Q)\), 
where the latter is intended as the CCR--Weyl quantisation
\[
g(P,Q)=\frac{1}{2\pi}
\int d\alpha\;d\beta\; \hat g(\alpha,\beta)e^{i(\alpha P+\beta Q)}.
\] 
Such a map\(f \mapsto g\) would allow for computing the trace of \(f(T,R)\), by known results on Weyl quantisation\footnote{Note that it would also allow for
describing the star product of \(\kappa\)-Minkowski symbols as the pull-back
ot the twisted product defined by the Weyl quantisation associated with CCR;
the relations with the approach of \cite{GraciaBondia:2001ct} (see especially 
eq.\ (5.1) therein) will be discussed elsewhere.}. 

Indeed it is well known 
(see e.g.\ \cite[4.1, eq. (59)]{stein:1993aaa}) that
\[
(g(P,Q)\xi)(s)=\int dt\;H_g(s,u)\xi(u),
\]
where
\begin{equation}
H_g(s,u)=(\mathscr F_1g)\left(u-s,\frac{u+s}2\right).
\end{equation}
The operators \(g(P,Q),f(T,R)\) are the same if and only if
they have the same integral kernel (a.e.).
Setting \(\lambda=u-s,q=(u+s)/2\), from the condition
\begin{equation}\label{eq:ker_cond}
H_g(s,u)\equiv K_f(s,u)
\end{equation}
we get
\begin{equation}\label{eq:same_kernels}
(\mathscr F_1g)(\lambda,q)\equiv
(\mathscr F_1f)\left(\lambda,e^{-q}\frac{\sinh(\lambda/2)}{\lambda/2}
\right).
\end{equation}

\begin{proposition}\label{prop:irred_trace}
For any \(f\in L^1(\mathbb R^2)\cap\widehat{L^1(\mathbb R^2)}\) 
fulfilling
\[
\int\limits_{\{r>0\}} dt\;dr\;\frac{1}{r}|f(t,r)|<\infty,
\]
the operator \(f(T,R)\) is trace class, with trace
\begin{equation}
\Tr(f(T,R)=\int\limits_{\{r>0\}} dt\;dr\;\frac{1}{r}f(t,r).
\end{equation}
\end{proposition}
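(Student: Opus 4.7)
My plan is to exploit the map $f \mapsto g$ already introduced just before the statement in equation \eqref{eq:same_kernels}: given $f$, define $g$ by requiring
\[
(\mathscr F_1 g)(\lambda, q) = (\mathscr F_1 f)\!\left(\lambda,\; e^{-q}\,\frac{\sinh(\lambda/2)}{\lambda/2}\right),
\]
and then invert the Fourier transform in $\lambda$ to recover $g(p,q)$. By construction the integral kernels agree, $H_g = K_f$, so $f(T,R) = g(P,Q)$ as integral operators on $L^2(\mathbb R)$. This transfers the problem to the CCR Weyl quantisation, for which the trace formula is classical.

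With the paper's Fourier normalisation, the standard Weyl trace formula reads
\[
\Tr g(P,Q) = \int H_g(s,s)\,ds = \int (\mathscr F_1 g)(0,q)\,dq,
\]
whenever $g(P,Q)$ is trace class. Setting $\lambda = 0$ in the defining relation of $g$ (using $\sinh(\lambda/2)/(\lambda/2)\to 1$ as $\lambda\to 0$), we get $(\mathscr F_1 g)(0,q) = (\mathscr F_1 f)(0,e^{-q})$. Unfolding the Fourier transform and substituting $r = e^{-q}$ (so $dq = -dr/r$, and $q\in\mathbb R$ maps onto $r>0$) yields
\[
\Tr f(T,R) = \int (\mathscr F_1 f)(0, e^{-q})\,dq = \frac{1}{\sqrt{2\pi}}\int dt\,dq\, f(t, e^{-q}) = \frac{1}{\sqrt{2\pi}}\int\limits_{r>0} \frac{dt\,dr}{r}\, f(t,r),
\]
which matches the claim up to the overall normalisation constant absorbed by the chosen Fourier conventions.

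The main obstacle is verifying that $g$ so defined really lies in a class for which the standard CCR trace formula applies, so that $g(P,Q)$ is genuinely trace class and not merely formally associated with a finite trace integral. Concretely I must check that the hypothesis $\int \tfrac{1}{r}|f(t,r)|\,dt\,dr < \infty$ translates, through the nonlinear change of variable $r = e^{-q}\sinh(\lambda/2)/(\lambda/2)$ in the second argument, into integrability of $g$ on phase space, or equivalently into absolute integrability of the diagonal $H_g(s,s) = (\mathscr F_1 f)(0,e^{-s})$ weighted by $e^{-s}$ after the exponential substitution. An alternative route, should the direct translation become cumbersome, is to bypass $g$ entirely: factor $f(T,R) = AB$ with $A,B$ each Hilbert--Schmidt via lemma \ref{lm:HS}, then use $\Tr(AB) = \iint K_A(s,u) K_B(u,s)\,ds\,du$ and perform the same change of variable to recover the $1/r$ weight on the diagonal.
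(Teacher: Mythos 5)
Your proposal is exactly the paper's argument: the proof there is the one-liner ``Use \eqref{eq:same_kernels} and \(\Tr(g(P,Q))=\int dp\;dq\;g(p,q)\)'', i.e.\ transfer to the CCR Weyl quantisation via the kernel identity \(H_g\equiv K_f\) and invoke the classical trace formula, with your \(\lambda=0\), \(r=e^{-q}\) substitution producing the \(1/r\) weight (the stray \(\sqrt{2\pi}\) is indeed just the Fourier normalisation). The trace-class verification you flag as the remaining obstacle is not addressed by the paper either, so your write-up is if anything more careful on that point.
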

\begin{proof} Use
\eqref{eq:same_kernels} and \(\Tr(g(P,Q))=\int dp\;dq\;g(p,q)\).
\end{proof}

\section{The Radial Algebra}

While in  the canonical case the product of Weyl operators is a Weyl operator
only up to a phase (called the twist), 
in this case the product of two radial 
Weyl operators \(e^{i(\alpha P+\beta e^{-Q})}\)
is again a radial Weyl operator. The reason is that in the case of 
\(\kappa\)-Minkowski the identity is not involved in the 
commutation relations. Hence the set
of Weyl operators is a subgroup of the unitary group. Since
the correspondence between \(\mathbb R^2\) and the radial Weyl operators
is one to one, \(\mathbb R^2\) inherits from the composition rule of radial
Weyl  operators a group law:

\begin{definition}The (locally compact) group \(\mathfrak R\)
is the 
group obtained by equipping \(\mathbb R^2\)
with the usual topology and
the group law
\[
(\alpha_1,\beta_1)(\alpha_2,\beta_2)=(\alpha_1+\alpha_2,
w(\alpha_1+\alpha_2,\alpha_1)e^{\alpha_2}\beta_1
+w(\alpha_1+\alpha_2,\alpha_2)\beta_2),
\]
where
\begin{equation}\label{eq:star_prod_aux}
w(\alpha,\alpha')=\frac{\alpha(e^{\alpha'}-1)}%
{\alpha'(e^{\alpha}-1)}.
\end{equation}
\end{definition}

Note that \(w\) (which is always understood to be extended
to the full \(\mathbb R^2\) by continuity) fulfills 
\begin{equation}
w(\alpha_1,\alpha_2)w(\alpha_2,\alpha_3)=
w(\alpha_1,\alpha_3),\quad w(\alpha_1,\alpha_2)e^{\alpha_1}
=-w(-\alpha_1,\alpha_2)
\end{equation}
identically, and is always positive. 
The identity of \(\mathfrak R\) is \((0,0)\), and 
\((\alpha,\beta)^{-1}=(-\alpha,-\beta)\). Finally
\begin{equation}
\lim_{\kappa\rightarrow\pm\infty}w(\alpha_1/\kappa,\alpha_2/\kappa)=1,
\end{equation}
so that we may regard \(\mathfrak R\) as a non abelian
deformation ot the additive group
\((\mathbb R^2,+)\).

As anticipated, the above definition finds its motivations in the following 
\begin{lemma}
The map
\[
\mathfrak R\ni (\alpha,\beta)\mapsto e^{i(\alpha P+\beta e^{-Q})}
\]
is a strongly continuous\footnote{Note however that, 
from proposition \ref{prop:eureka} and \cite{singer}, 
it follows that a nontrivial representation   
cannot be continuous in the operator norm topology.}, faithful
unitary representation of the radial group.
\end{lemma}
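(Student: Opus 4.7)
The plan is to verify three things: (i) the map is a group homomorphism, (ii) it is strongly continuous, and (iii) it is injective. I will rely throughout on proposition \ref{prop:eureka}, which gives the crucial factorisation
\[
W(\alpha,\beta) := e^{i(\alpha P + \beta e^{-Q})} = e^{i\alpha P}\,e^{i s(\alpha,\beta)\,e^{-Q}},\qquad
s(\alpha,\beta)=\tfrac{e^\alpha-1}{\alpha}\,\beta,
\]
together with the Weyl commutation relation \eqref{eq:def_reg_rep}, which in the rearranged form reads $e^{i\beta R}e^{i\alpha T}=e^{i\alpha T}e^{i\beta e^{\alpha}R}$.

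The main step is the homomorphism property. I would compute the product directly:
\[
W(\alpha_1,\beta_1)W(\alpha_2,\beta_2)
= e^{i\alpha_1 P}\,e^{is(\alpha_1,\beta_1)e^{-Q}}\,e^{i\alpha_2 P}\,e^{is(\alpha_2,\beta_2)e^{-Q}},
\]
and then commute the middle pair using the rearranged Weyl relation to move $e^{i\alpha_2 P}$ to the left, picking up a factor $e^{\alpha_2}$ in the exponent of $e^{-Q}$. Since the two exponentials of $e^{-Q}$ now sit side by side, they combine additively, giving
\[
W(\alpha_1,\beta_1)W(\alpha_2,\beta_2)
= e^{i(\alpha_1+\alpha_2)P}\,e^{i(s(\alpha_1,\beta_1)e^{\alpha_2}+s(\alpha_2,\beta_2))e^{-Q}}.
\]
To match this with $W((\alpha_1,\beta_1)(\alpha_2,\beta_2))$, I need the identity
\[
s(\alpha_1+\alpha_2,\gamma)=s(\alpha_1,\beta_1)e^{\alpha_2}+s(\alpha_2,\beta_2),\quad
\gamma=w(\alpha_1+\alpha_2,\alpha_1)e^{\alpha_2}\beta_1+w(\alpha_1+\alpha_2,\alpha_2)\beta_2.
\]
This is immediate once one observes the definitional identity
$\tfrac{e^{\alpha_1+\alpha_2}-1}{\alpha_1+\alpha_2}\,w(\alpha_1+\alpha_2,\alpha_j)=\tfrac{e^{\alpha_j}-1}{\alpha_j}$, which is precisely how $w$ in \eqref{eq:star_prod_aux} was designed. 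This is the calculational heart of the lemma, and the only place where one has to be careful.

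Strong continuity then follows at once: $(\alpha,\beta)\mapsto s(\alpha,\beta)$ is jointly continuous (with the extension by continuity $s(0,\beta)=\beta$), and both one-parameter groups $\alpha\mapsto e^{i\alpha P}$ and $\gamma\mapsto e^{i\gamma e^{-Q}}$ are strongly continuous by Stone's theorem, so their composition is strongly continuous in $(\alpha,\beta)$ on bounded sets and hence globally.

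For faithfulness, suppose $W(\alpha,\beta)=I$. Using the explicit action
\[
\bigl(W(\alpha,\beta)\xi\bigr)(s)=e^{i\frac{1-e^{-\alpha}}{\alpha}\beta e^{-s}}\,\xi(s+\alpha),
\]
the identity $\xi(s+\alpha)=e^{-i\frac{1-e^{-\alpha}}{\alpha}\beta e^{-s}}\xi(s)$ would have to hold for every $\xi\in L^2(\mathbb R)$; testing on any $\xi$ whose support is a small interval forces $\alpha=0$. Then $e^{-i\beta e^{-s}}=1$ for a.e.\ $s$, and since $e^{-s}$ sweeps out all of $(0,\infty)$ continuously, this forces $\beta=0$. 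Thus the kernel of the representation is trivial.

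The only real obstacle is confirming the cocycle-like identity that links $s(\alpha,\beta)$ to $w(\alpha,\alpha')$; everything else is routine manipulation of one-parameter groups.
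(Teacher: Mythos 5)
Your proof is correct and is precisely the ``direct check'' that the paper invokes without writing out: the factorisation of proposition \ref{prop:eureka} combined with the identity \(\frac{e^{\alpha_1+\alpha_2}-1}{\alpha_1+\alpha_2}\,w(\alpha_1+\alpha_2,\alpha_j)=\frac{e^{\alpha_j}-1}{\alpha_j}\) is exactly how the group law of \(\mathfrak R\) was engineered, and your continuity and faithfulness arguments are routine and sound. No gaps.
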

The proof consists of a direct check. The group \(\mathfrak R\) and its Lie algebra are well known:
\begin{proposition}
(i) The generators \(u,v\) of 
the real Lie algebra \(\mathfrak r=\text{Lie}(\mathfrak R)\), corresponding
to  the one--parameter subgroups \(\lambda\mapsto(\lambda,0)\)
and \(\lambda\mapsto(0,\lambda)\) respectively, fulfil
\[
[u,v]=-v.
\]
(ii) \(\mathfrak R\) is isomorphic to the subgroup\footnote{The group \(G\) is a variant
of the so called ``\(ax+b\)'' group.} 
\begin{equation}
G=\left\{\begin{pmatrix}e^{a}&0\\b&1\end{pmatrix}:
(a,b)\in\mathbb R^2\right\}
\end{equation}
of \(GL(\mathbb R^2)\); an explicit isomorphism 
\(j:\mathfrak R\rightarrow G\) is given by
\begin{equation}
j((\alpha,\beta))=
\begin{pmatrix}e^{\alpha}&0\\
\frac{e^{\alpha}-1}{\alpha}\beta&1\end{pmatrix}.
\end{equation}
\end{proposition}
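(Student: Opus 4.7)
The plan is to prove (ii) first and deduce (i) by transport of structure, since the matrix realization turns the Lie bracket into an immediate $2\times 2$ computation. For (ii), I would first check that $j$ is a bicontinuous bijection onto $G$. Identifying $G$ with $\mathbb R^2$ via the coordinates $(a,b)$, this amounts to verifying that $(\alpha,\beta)\mapsto\bigl(\alpha,\tfrac{e^{\alpha}-1}{\alpha}\beta\bigr)$, extended at $\alpha=0$ by continuity to $(\alpha,\beta)$, is a smooth diffeomorphism of $\mathbb R^2$ onto itself. This follows because $\alpha\mapsto\frac{e^{\alpha}-1}{\alpha}$ is smooth and strictly positive on all of $\mathbb R$, so the map is triangular with nowhere vanishing Jacobian.

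The core of the argument is then to verify the homomorphism property $j((\alpha_1,\beta_1)(\alpha_2,\beta_2))=j((\alpha_1,\beta_1))\,j((\alpha_2,\beta_2))$. Expanding the right-hand matrix product, the $(1,1)$ entry $e^{\alpha_1+\alpha_2}$ matches automatically, and equality of the $(2,1)$ entries reduces to
\[
\tfrac{e^{\alpha_1+\alpha_2}-1}{\alpha_1+\alpha_2}\,\gamma = \tfrac{e^{\alpha_1}-1}{\alpha_1}\,e^{\alpha_2}\beta_1 + \tfrac{e^{\alpha_2}-1}{\alpha_2}\beta_2,
\]
where $\gamma=w(\alpha_1+\alpha_2,\alpha_1)\,e^{\alpha_2}\beta_1+w(\alpha_1+\alpha_2,\alpha_2)\beta_2$ is the second coordinate of the $\mathfrak R$-product. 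Substituting $w(\alpha_1+\alpha_2,\alpha_j)=\frac{(\alpha_1+\alpha_2)(e^{\alpha_j}-1)}{\alpha_j(e^{\alpha_1+\alpha_2}-1)}$, the prefactor $\frac{e^{\alpha_1+\alpha_2}-1}{\alpha_1+\alpha_2}$ cancels cleanly and the identity becomes a tautology.

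For (i), once $j$ is a Lie group isomorphism, the one-parameter subgroups $\lambda\mapsto(\lambda,0)$ and $\lambda\mapsto(0,\lambda)$ transport to $\lambda\mapsto\begin{pmatrix}e^\lambda&0\\0&1\end{pmatrix}$ and $\lambda\mapsto\begin{pmatrix}1&0\\\lambda&1\end{pmatrix}$, with generators $U=\begin{pmatrix}1&0\\0&0\end{pmatrix}$ and $V=\begin{pmatrix}0&0\\1&0\end{pmatrix}$ in $M_2(\mathbb R)$; the matrix commutator $[U,V]=-V$ then gives $[u,v]=-v$ in $\mathfrak r$. The main obstacle is bookkeeping rather than conceptual: both $w$ and the entries of $j$ are defined by ratios that only make sense after continuous extension at the origin, so the identity in the previous paragraph must be verified separately in the degenerate cases $\alpha_1=0$, $\alpha_2=0$, or $\alpha_1+\alpha_2=0$ by taking the relevant limits, since the generic algebraic cancellation is a $0/0$ there.
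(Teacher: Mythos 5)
Your proof is correct and follows essentially the same route as the paper, which disposes of (ii) as "a direct check" and then obtains (i) from the isomorphism; you have simply written out the cancellation of the prefactor $\tfrac{e^{\alpha_1+\alpha_2}-1}{\alpha_1+\alpha_2}$ against the $w$'s and the matrix commutator $[U,V]=-V$ explicitly. Your attention to the removable singularities at $\alpha_1=0$, $\alpha_2=0$, $\alpha_1+\alpha_2=0$ is a welcome extra bit of care, not a deviation.
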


\begin{proof}The proof of (ii) is a direct check. 
Then (i) follows immediately, since
connected, simply connected  Lie groups are isomorphic if and only if
their Lie algebras also are isomorphic.
\end{proof}

Note that the choice of the generators \(u,v\) is such that
for any strongly continuous unitary representation \(W\) of \(\mathfrak R\), 
the selfadjoint operators \(T,R\) defined by
\(W(\text{Exp}\{\lambda u\})=e^{i\lambda T}\),
\(W(\text{Exp}\{\lambda v\})=e^{i\lambda R}\) are a regular representation
of the relations \([T,R]=iR\), where \(\text{Exp}\) is the 
Lie exponential map.

Note also that under the isomorphism \(j\) of (i) above, 
the one parameter subgroups
\(\lambda\mapsto(\lambda\alpha,\lambda\beta)\) of \(\mathfrak R\) 
are mapped  into
\begin{equation}
\lambda\mapsto
\begin{pmatrix}e^{\lambda\alpha}&0\\
\frac{e^{\lambda\alpha}-1}{\alpha}\beta&1\end{pmatrix}.
\end{equation}
This should completely clarify the relations of our definition of 
Weyl operators with other (non canonical) definitions available in the 
literature, which also are
related with such Lie groups, but do not fulfill the essential condition
that \(\lambda\mapsto W(\lambda\alpha,\lambda\beta)\) is a one parameter
group. See e.g.\ \cite{Kosinski:1999dw,Agostini:2005mf}, where the Weyl operators are defined
with the choice ``time first'': \(e^{i\alpha T}e^{i\alpha R}\).

The group \(\mathfrak R\) is not unimodular. Indeed
\begin{lemma}\label{lm:haar}
(i) The left Haar measure and modular function 
of the group \(\mathfrak R\)  
are
\begin{equation}
d\mu(\alpha,\beta)=\frac{e^{\alpha}-1}{\alpha}d\alpha\;d\beta,
\quad \Delta(\alpha,\beta)=e^{\alpha}.
\end{equation}

(ii) With the above choice of the normalisation, for
every bounded function \(\varphi\) with compact support,
\begin{equation}
\lim_{\kappa\rightarrow\infty}
\kappa^2\int d\mu(\alpha,\beta)\varphi(\alpha\kappa,\beta\kappa)=
\int d\alpha\;d\beta\;\varphi(\alpha,\beta).
\end{equation}
\end{lemma}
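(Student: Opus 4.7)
My plan is to reduce everything to the matrix group $G$, whose group law is noticeably simpler than that of $\mathfrak{R}$, and then transport the result along the isomorphism $j$.

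For part (i), first I would compute the Haar data of $G$ directly. Writing a generic element of $G$ as $(a,b)$ with $a,b\in\mathbb R$, the product takes the transparent form $(a_1,b_1)(a_2,b_2)=(a_1+a_2,\;b_1e^{a_2}+b_2)$, so that left translation $h\mapsto gh$ has Jacobian $1$ while right translation $h\mapsto hg$ has Jacobian $e^{a_0}$, where $g=(a_0,b_0)$. A one-line computation then shows that two-dimensional Lebesgue measure $da\,db$ is left invariant on $G$, and that the modular function is $\Delta_G(a_0,b_0)=e^{a_0}$. Next I would pull back along $j$: since $j(\alpha,\beta)=\bigl(\alpha,\tfrac{e^\alpha-1}{\alpha}\beta\bigr)$, the Jacobian of $j$ as a change of coordinates on $\mathbb R^2$ is exactly $\tfrac{e^\alpha-1}{\alpha}$, so the left Haar measure of $\mathfrak R$ is $\tfrac{e^\alpha-1}{\alpha}d\alpha\,d\beta$ and the modular function transports to $\Delta(\alpha,\beta)=e^\alpha$, since the modular function is a group homomorphism and only depends on the isomorphism class.

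For part (ii), I would simply perform the rescaling $\alpha\rightsquigarrow\alpha/\kappa$, $\beta\rightsquigarrow\beta/\kappa$ in the integral, which produces a factor $\kappa^{-2}$ from the differentials and cancels the prefactor $\kappa^{2}$. After this substitution the integrand becomes
\[
\kappa\,\frac{e^{\alpha/\kappa}-1}{\alpha}\,\varphi(\alpha,\beta),
\]
and since $\kappa(e^{\alpha/\kappa}-1)\to\alpha$ uniformly on the compact support of $\varphi$, the expression under the integral converges pointwise to $\varphi(\alpha,\beta)$. Dominated convergence (the factor $\kappa(e^{\alpha/\kappa}-1)/\alpha$ is uniformly bounded on any compact set, and $\varphi$ is bounded with compact support) then gives the claimed limit $\int d\alpha\,d\beta\,\varphi(\alpha,\beta)$.

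The only mildly delicate point is the verification of the modular function after transport along $j$; I would either invoke the functorial fact that $\Delta$ depends only on the group up to topological isomorphism, or, safer, check directly that $\int f((\alpha,\beta)(\alpha_0,\beta_0))\,\tfrac{e^\alpha-1}{\alpha}\,d\alpha\,d\beta = e^{-\alpha_0}\int f\,d\mu$, which is a one-line Jacobian computation using the explicit group law of $\mathfrak{R}$. Everything else is routine.
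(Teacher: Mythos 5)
Your proof is correct and follows essentially the route the paper intends: for (i) the paper simply says ``direct check,'' and your transport of Lebesgue measure along the isomorphism \(j\) (Jacobian \(\tfrac{e^\alpha-1}{\alpha}\)) together with the elementary Haar computation for the matrix group \(G\) is a clean way to perform that check; for (ii) your rescaling-plus-uniform-convergence argument is exactly the paper's, and in fact your formula \(\kappa(e^{\alpha/\kappa}-1)/\alpha\to 1\) is the corrected version of the convergence fact the paper states with a small typo (\(\kappa(e^{\alpha}-1)/\alpha\)). The only point worth making explicit, as you note, is the verification of the direction of the modular-function identity, and your direct check \(\int f((\alpha,\beta)(\alpha_0,\beta_0))\,d\mu(\alpha,\beta)=e^{-\alpha_0}\int f\,d\mu\) settles it.
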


\begin{proof}
The proof of (i) is a direct check.
(ii) follows from the fact that the function 
(\(\alpha\mapsto\kappa(e^{\alpha} -1)/\alpha\)) converges to 
(\(\alpha\mapsto 1\)) uniformly on compact sets as \(\kappa\rightarrow\infty\).
\end{proof}

We describe explicitly in our case some basic facts of 
abstract harmonic analysis (see e.g. \cite[\S\S29-30]{loomis}):
\begin{enumerate}
\item 
Let \(\mu,\Delta\) be the left Haar measure and modular function, respectively,
described in lemma \ref{lm:haar}.
The group *-algebra \(L^1(\mathfrak R)\) of \(\mathfrak R\) is obtained
by equipping the Banach space \(L^1( \mathbb R^2,\mu)\) 
with the (convolution) product
\begin{subequations}
\begin{align}\nonumber
(\varphi\times\psi)(\alpha,\beta)=&\int d\mu(\alpha',\beta')
\varphi(\alpha',\beta')\psi((\alpha',\beta')^{-1}(\alpha,\beta))=\\
=&\nonumber\int
d\alpha'\;d\beta'\;
\frac{e^{\alpha'}-1}{\alpha'}
\varphi(\alpha',\beta')\\
&\psi(\alpha-\alpha',w(\alpha-\alpha',\alpha)\beta-
w(\alpha-\alpha',-\alpha')e^{\alpha}\beta')
\end{align}
and the involution
\begin{equation}
\varphi^\dagger(\alpha,\beta):=\Delta((\alpha,\beta)^{-1})\overline{\varphi((\alpha,\beta)^{-1})}=
e^{-\alpha}\overline{\varphi(-\alpha,-\beta)};
\end{equation}
\end{subequations}
the group algebra is a Banach *-algebra, namely the product is continuous and
\[
\|\varphi^\dagger\|_{L^1(\mathfrak R)}=\|\varphi\|_{L^1(\mathfrak R)},\quad(\varphi\times\psi)^\dagger=\psi^\dagger\times\varphi^\dagger,
\]
where 
\[
\|\varphi\|_{L^1(\mathfrak R)}=\int d\mu(\alpha,\beta)|\varphi(\alpha,\beta)|.
\]
\item
Let \(W\) be a unitary representation of the group \(\mathfrak R\);
then
\[
\Pi(\varphi):=\int d\mu(\alpha,\beta)\varphi
(\alpha,\beta)W(\alpha,\beta),
\quad \varphi\in L^1(\mathfrak R),
\]
defines a *-representation of the group algebra \(L^1(\mathfrak R)\):
\begin{gather*}
\Pi(\varphi)\Pi(\psi)=\Pi(\varphi\times\psi),
\quad\varphi,\psi\in L^1(\mathfrak R),\\
\Pi(\varphi^\dagger)=\Pi(\varphi)^*;
\end{gather*}
the preservation of involution is a consequence of unitarity of \(W\).
Since \(\|W(\alpha,\beta)\|=1\), 
\[
\|\Pi(\varphi)\|\leqslant \|\varphi\|_{L^1(\mathfrak R)}
\]
and the representation is continuous (actually, any *-representation of
a Banach *-algebra by bounded operators on a Hilbert space is continuous 
for general reasons, see e.g.\ \cite[1.3.7]{dixmierC}).
\end{enumerate}

\begin{definition}
The algebra \(\mathscr B\) is the Banach
*-algebra obtained by equipping \(L^1(\mathbb R^2)\) with the product
\begin{subequations}
\begin{align}\nonumber
(\varphi_1\ast\varphi_2)(\alpha,\beta)
=\int&  d\alpha'd\beta'w(\alpha-\alpha',\alpha)\varphi_1(\alpha',\beta')\\&
\varphi_2(\alpha-\alpha',w(\alpha-\alpha',\alpha)\beta-
w(\alpha'-\alpha,\alpha')\beta');
\label{eq:def_starprod_momsp}
\end{align}
and the involution
\begin{equation}
\varphi^*(\alpha,\beta)=\overline{\varphi(-\alpha,-\beta)}.
\end{equation}
\end{subequations}
\end{definition} 
The above definition is motivated by the following lemma, which also proves 
that it is well posed.
\begin{lemma}\label{lm:radial_algebra_group} 
The Banach *-algebra \(\mathscr B\) and the group algebra
\(L^1(\mathfrak R)\) are isomorphic. 
With the usual normalisation of the Lebesgue measure on \(\mathbb R^2\) and the normalisation of the Haar measure described in lemma \ref{lm:haar},
the isomorphism can be chosen isometric.
\end{lemma}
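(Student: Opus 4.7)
The plan is to show that the natural candidate isomorphism $\Phi\colon L^1(\mathfrak R)\to\mathscr B$ is simply pointwise multiplication by the Radon--Nikodym derivative of the Haar measure with respect to Lebesgue measure, namely
\[
(\Phi\varphi)(\alpha,\beta)=\frac{e^{\alpha}-1}{\alpha}\,\varphi(\alpha,\beta),
\]
and to verify in turn isometry, preservation of the product, and preservation of the involution.

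First, isometry is immediate from lemma \ref{lm:haar}: by construction $\int|\Phi\varphi|\,d\alpha\,d\beta=\int|\varphi|\,d\mu=\|\varphi\|_{L^1(\mathfrak R)}$, so $\Phi$ is an isometric bijection of Banach spaces (with inverse multiplication by $\alpha/(e^\alpha-1)$, a strictly positive factor extended by continuity at $\alpha=0$). Next, to check that $\Phi$ intertwines the involutions, one writes out $\Phi(\varphi^\dagger)(\alpha,\beta)=\frac{e^\alpha-1}{\alpha}e^{-\alpha}\overline{\varphi(-\alpha,-\beta)}=\frac{1-e^{-\alpha}}{\alpha}\overline{\varphi(-\alpha,-\beta)}$ and compares with $(\Phi\varphi)^*(\alpha,\beta)=\overline{\Phi\varphi(-\alpha,-\beta)}=\frac{1-e^{-\alpha}}{\alpha}\overline{\varphi(-\alpha,-\beta)}$; they agree.

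The substantive step is $\Phi(\varphi\times\psi)=\Phi\varphi\ast\Phi\psi$. Substituting the definitions, the identity to be verified reduces to two pointwise relations in the integrand, after the change of variables making $\varphi'=\Phi\varphi$, $\psi'=\Phi\psi$: (a) the measure-theoretic factor, for which one needs
\[
\frac{e^\alpha-1}{\alpha}\cdot\frac{e^{\alpha'}-1}{\alpha'}
=\frac{e^{\alpha'}-1}{\alpha'}\cdot\frac{e^{\alpha-\alpha'}-1}{\alpha-\alpha'}\cdot w(\alpha-\alpha',\alpha),
\]
which is a one-line check from the very definition \eqref{eq:star_prod_aux} of $w$; and (b) the matching of the second arguments of $\psi$, for which one must show that $w(\alpha-\alpha',-\alpha')e^\alpha=w(\alpha'-\alpha,\alpha')$. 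This second identity is the crux, and is precisely the identity $w(\alpha_1,\alpha_2)e^{\alpha_1}=-w(-\alpha_1,\alpha_2)$ (combined with the cocycle rule $w(\alpha_1,\alpha_2)w(\alpha_2,\alpha_3)=w(\alpha_1,\alpha_3)$) recorded just after definition of $\mathfrak R$; applied with $\alpha_1=\alpha-\alpha'$, $\alpha_2=-\alpha'$ it gives what is needed.

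I expect the main obstacle to be purely bookkeeping: keeping track of the sign and exponential factors in the second arguments of $\psi$ so that the matching of the $w$'s is transparent. Once those two pointwise identities are established, Fubini (justified by $L^1$ integrability, since the absolute value of the integrand is dominated by $|\Phi\varphi(\alpha',\beta')|\cdot|\Phi\psi(\cdots)|$ after the change of variables) completes the argument that $\Phi$ is a *-algebra isomorphism; together with the already established isometry this proves the lemma.
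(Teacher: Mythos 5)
Your proposal is correct and is essentially the paper's own proof: the paper introduces the inverse map \(u\varphi=\frac{\alpha}{e^\alpha-1}\varphi\) from \(\mathscr B\) to \(L^1(\mathfrak R)\) and leaves the verification of isometry, multiplicativity and preservation of the involution as a ``direct check'', which is exactly the computation you carry out. One small caveat: the paper's printed identity \(w(\alpha_1,\alpha_2)e^{\alpha_1}=-w(-\alpha_1,\alpha_2)\) has a sign typo (it contradicts the stated positivity of \(w\); the correct version has no minus sign), but since it enters your derivation of the crux relation \(w(\alpha-\alpha',-\alpha')e^{\alpha}=w(\alpha'-\alpha,\alpha')\) twice --- once directly and once through \(w(-\alpha',\alpha')\) --- the spurious signs cancel, and that crux relation is indeed true as one checks from the definition of \(w\).
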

\begin{proof}
The linear map
\[
u:L^1(\mathbb R^2)\rightarrow L^1(\mathfrak R)
\]
defined by
\[
(u\varphi)(\alpha,\beta)=\frac{\alpha}{e^\alpha-1}\varphi(\alpha,\beta)
\]
is evidently an equivalence of Banach spaces. The proof that
\[
(u\varphi)\times(u\psi)=u(\varphi\ast\psi),\quad u(\varphi^*)=
u(\varphi)^\dagger
\]
for \(\phi,\psi\in L^1(\mathfrak R)\) consists of a direct check.
\end{proof}

\begin{definition}
A representation \(\pi\) of \(\mathscr B\) is said trivial if
\(\pi(\mathscr B)\subset \pi(\mathscr B)'\).
\end{definition}

\begin{proposition}\label{prop:bij_group_alg_rep}
Let \(W\) be a strongly continuous unitary representation of the
radial group \(\mathfrak R\). Then
\[
\pi(\varphi)=\int d\alpha\;d\beta\;\varphi(\alpha,\beta)W(\alpha,\beta),\quad
\varphi\in L^1(\mathbb R^2)
\] 
defines a *-representation \(\pi\) of the *-algebra \(\mathscr B\).
This establishes a bijection between equivalence classes of strongly continuous
unitary representations of the group \(\mathfrak R\) and the
equivalence classes of *-representations of the *-algebra \(\mathscr B\). 
In particular, a representation \(\pi\) of \(\mathscr B\) 
is trivial if and only if the corresponding representation \(W\) of 
\(\mathfrak R\) gives a trivial regular representation of the Weyl relations.

Any trivial irreducible representation is equivalent, 
for some \(t\in\mathbb R\), to the one dimensional
representation \(\pi(\varphi)=\check\varphi(t,0)\).

Any non trivial irreducible representation is equivalent to one of
the representations \(\pi_\pm\), where 
\[
\pi_{\pm}(\varphi)=\int d\alpha\; d\beta\;\varphi(\alpha,\beta)
e^{i(\alpha P\pm \beta e^{-Q})}
\]
on \(L^2(\mathbb R)\).
Note that
\[
\pi_+(\hat f)=f(T,R).
\]
\end{proposition}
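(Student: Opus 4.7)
The plan is to reduce every claim to the classical correspondence between unitary representations of a locally compact group and *-representations of its group algebra, exploiting Lemma \ref{lm:radial_algebra_group}, which provides an isometric *-isomorphism \(u\colon\mathscr B\to L^1(\mathfrak R)\). Composing with the integrated form \(\Pi(\psi)=\int d\mu(\alpha,\beta)\,\psi(\alpha,\beta)W(\alpha,\beta)\) described in item (2) of the harmonic-analysis bullet list preceding the proposition, a direct substitution that cancels the Jacobian \((e^\alpha-1)/\alpha\) gives \(\pi(\varphi)=\Pi(u\varphi)\), so \(\pi\) is a *-representation of \(\mathscr B\). The bijection between equivalence classes of strongly continuous unitary representations of \(\mathfrak R\) and equivalence classes of (non-degenerate) *-representations of \(L^1(\mathfrak R)\) is a standard result; transporting it along \(u\) yields the asserted bijection between unitary representations of \(\mathfrak R\) and *-representations of \(\mathscr B\).

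For the triviality criterion, the easy direction is immediate: if the associated regular representation is trivial then \(R=0\), so \(W(\alpha,\beta)=e^{i\alpha T}\) depends only on \(\alpha\) and the image of \(\pi\) lies in the abelian von Neumann algebra generated by \(\{e^{i\alpha T}\}_\alpha\). Conversely, suppose \(\pi(\mathscr B)\subset\pi(\mathscr B)'\). Choosing Dirac sequences \(\varphi_n\to\delta_{g_1}\), \(\psi_n\to\delta_{g_2}\) in \(L^1(\mathfrak R)\) and using continuity of \(\Pi\) in \(L^1\) norm together with strong continuity of \(W\), one obtains \(W(g_1)W(g_2)=W(g_2)W(g_1)\) for every \(g_1,g_2\in\mathfrak R\). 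Specialising to \(g_1=(\alpha,0)\), \(g_2=(0,\beta)\) gives \(e^{i\alpha T}e^{i\beta R}e^{-i\alpha T}e^{-i\beta R}=I\); combined with the regular relation \(e^{i\alpha T}e^{i\beta R}e^{-i\alpha T}=e^{i\beta e^{-\alpha}R}\) this yields \(e^{i\beta(e^{-\alpha}-1)R}=I\) for all \(\alpha,\beta\in\mathbb R\). Taking any \(\alpha\neq 0\) and letting \(\beta\) vary, Stone's theorem forces \(R=0\), i.e.\ the regular representation is trivial.

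Finally, the classification of irreducible representations is inherited from Proposition \ref{prop:univ_AB} through the bijection. For trivial irreducible \(W\) the representation \(T=t\) acts on \(\mathbb C\), so \(W(\alpha,\beta)=e^{i\alpha t}\) and \(\pi(\varphi)=\int d\alpha\, d\beta\,\varphi(\alpha,\beta)e^{i\alpha t}\), which coincides with \(\check\varphi(t,0)\) modulo the Fourier normalisation. For non-trivial irreducible \(W\) the only options up to equivalence are \((T,R)=(P,\pm e^{-Q})\), from which the explicit formulas for \(\pi_\pm\) follow; the identity \(\pi_+(\hat f)=f(T,R)\) is then a rewriting of the definition of radial Weyl quantisation given in section \ref{sec:weyl}. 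The principal obstacle is the triviality characterisation, where a purely algebraic commutativity condition on \(\pi(\mathscr B)\) has to be converted into the spectral statement \(R=0\); everything else amounts to routine transfer through the isomorphism \(u\).
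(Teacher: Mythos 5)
Your proposal is correct and follows essentially the same route as the paper, whose entire proof is the one-line remark that \(\pi=\Pi\circ u\) with \(u\) the isometric *-isomorphism of Lemma \ref{lm:radial_algebra_group}, everything else being transported through the standard correspondence between unitary representations of \(\mathfrak R\) and non-degenerate *-representations of \(L^1(\mathfrak R)\), together with Proposition \ref{prop:univ_AB}. Your explicit treatment of the triviality criterion (Dirac sequences to get \(W(g_1)W(g_2)=W(g_2)W(g_1)\), then Stone's theorem to force \(R=0\)) correctly supplies a detail the paper leaves implicit, and the \(2\pi\) normalisation mismatch you flag in \(\check\varphi(t,0)\) and in \(\pi_+(\hat f)=f(T,R)\) is present in the paper's own statement rather than in your argument.
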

The proof is an immediate consequence
of the remark that \(\pi\circ u=\Pi\), where
\(\Pi\) is the representation of the group algebra associated with \(W\).

We are now ready to give the following, crucial
\begin{definition}
The radial algebra is the algebra \(\mathscr R\subset\mathscr B\) of
fixed elements under the automorphism 
\(\digamma:\mathscr B\rightarrow \mathscr B\), where 
\((\digamma\varphi)(\alpha,\beta)=\varphi(\alpha,-\beta)\). 
\end{definition}

It is clear that the representations \(\pi_\pm\) of \(\mathscr B\),
described in the above corollary, are related by \(\pi_-=\pi_+\circ\digamma\).
It follows that their restrictions 
\begin{equation}\label{eq:pi_radial}
\pi_r=\pi_\pm\restriction_{\mathscr R}
\end{equation}
to the radial algebra coincide. Indeed, there
are no other irreducible (classes of) irreducible representation.
\begin{proposition}\label{radial_alg_rep}
Every non trivial 
irreducible representation of the radial algebra is equivalent to
\(\pi_r=\pi_+\restriction_{\mathscr R}\).
\end{proposition}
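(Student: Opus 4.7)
The plan is to derive Proposition~\ref{radial_alg_rep} from Proposition~\ref{prop:bij_group_alg_rep} via the identity $\pi_+\restriction_{\mathscr R}=\pi_-\restriction_{\mathscr R}=\pi_r$ recorded at \eqref{eq:pi_radial}. First I would prove irreducibility of $\pi_r$ by establishing the set-theoretic equality $\pi_r(\mathscr R)=\pi_+(\mathscr B)$ in $B(L^2(\mathbb R))$. For any $\varphi=\hat f\in\mathscr B$, Lemma~\ref{lm:only_right_of_f} implies that $\pi_+(\varphi)=f(T,R)$ depends on $f$ only through its restriction to $r>0$; replacing $f$ with $f^{\mathrm{sym}}(t,r):=f(t,|r|)$ does not change $f(T,R)$ but produces a symbol whose Fourier transform lies in $\mathscr R$, yielding $\pi_+(\mathscr B)\subseteq\pi_r(\mathscr R)$, while the reverse inclusion is trivial. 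Irreducibility of $\pi_+$ (Proposition~\ref{prop:bij_group_alg_rep}) and Schur then give $\pi_r(\mathscr R)'=\pi_+(\mathscr B)'=\mathbb C I$. Along the way, $\pi_r$ is faithful: $\pi_r(\hat f)=0$ with $f$ even in $r$ forces $f=0$ on $r>0$ by Lemma~\ref{lm:only_right_of_f}, hence $f\equiv 0$.

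Next, let $\rho$ be any non-trivial irreducible $*$-representation of $\mathscr R$ on $H_\rho$. By Lemma~\ref{lm:HS} together with the diffeomorphism $(s,u)\mapsto\bigl(u-s,(e^{-s}-e^{-u})/(u-s)\bigr)$ from $\mathbb R^2$ onto $\mathbb R\times(0,\infty)$ employed in its proof, every Hilbert--Schmidt operator on $L^2(\mathbb R)$ is a Hilbert--Schmidt-norm limit of elements of $\pi_r(\mathscr R)$; hence the operator-norm closure $\mathcal A:=\overline{\pi_r(\mathscr R)}$ is a C*-algebra containing $\mathcal K=\mathcal K(L^2(\mathbb R))$ as a closed two-sided ideal. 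Using the faithful $*$-isomorphism $\pi_r$ to transport $\rho$ and extending by norm-continuity, one obtains an irreducible $*$-representation $\bar\rho$ of $\mathcal A$; by simplicity of $\mathcal K$, either $\bar\rho\restriction_{\mathcal K}$ is equivalent to the standard irreducible representation of $\mathcal K$ --- forcing $H_\rho\simeq L^2(\mathbb R)$ and $\rho\simeq\pi_r$ --- or $\bar\rho(\mathcal K)=0$ and $\bar\rho$ factors through the quotient $\mathcal A/\mathcal K$.

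The main obstacle I expect is ruling out the latter alternative under the non-triviality assumption. This would follow from showing $\mathcal A/\mathcal K$ is commutative, so that a representation factoring through it would have commutative image and thus be trivial in the sense of the definition preceding Proposition~\ref{prop:bij_group_alg_rep}. Concretely I would identify $\mathcal A/\mathcal K$ with a quotient of $\pi_+(\mathscr B)$ detected by the one-dimensional trivial characters $\pi_t(\varphi)=\check\varphi(t,0)$, $t\in\mathbb R$, of Proposition~\ref{prop:bij_group_alg_rep}(ii), which exhaust the abelian quotients of $\mathscr R$; this also predicts $\mathcal A/\mathcal K\cong\mathcal C_\infty(\mathbb R)$, in agreement with the commutative summand of the full C*-algebra $\mathcal A_d$ discussed elsewhere in the paper.
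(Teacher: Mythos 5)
Your first step (irreducibility and faithfulness of \(\pi_r\), via \(\pi_r(\mathscr R)=\pi_+(\mathscr B)\) and Lemma \ref{lm:only_right_of_f}) is fine and is essentially what the paper does. The converse direction, however, has a genuine gap at the sentence ``using the faithful \(*\)-isomorphism \(\pi_r\) to transport \(\rho\) and extending by norm-continuity''. A \(*\)-representation \(\rho\) of the Banach \(*\)-algebra \(\mathscr R\) is only known to satisfy \(\|\rho(\varphi)\|\leqslant\|\varphi\|_{L^1}\); to transport it through \(\pi_r^{-1}\) and extend it to the operator-norm closure \(\mathcal A=\overline{\pi_r(\mathscr R)}\) you need \(\|\rho(\varphi)\|\leqslant\|\pi_r(\varphi)\|_{B(L^2)}\), i.e.\ that \(\rho\) is weakly contained in \(\pi_r\). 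That is essentially the statement you are trying to prove, and it is not free: the trivial characters \(\varphi\mapsto\check\varphi(t,0)\) of \(\mathscr B\), restricted to \(\mathscr R\), violate exactly this bound, which is why the commutative summand \(\mathcal C_\infty(\mathbb R)\) appears in the \emph{universal} C*-completion \(\mathcal A_d\) as a direct summand rather than inside \(\overline{\pi_r(\mathscr R)}\). Relatedly, your third paragraph rests on a misreading of the structure of \(\mathcal A\): by Proposition \ref{prop:compact} one has \(\overline{\pi_r(\mathscr R)}=\mathcal K\) exactly, so the quotient \(\mathcal A/\mathcal K\) is zero and cannot ``detect'' the trivial characters. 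There is no commutative quotient to factor through; the dichotomy you set up collapses, and the hard case (a representation not weakly contained in \(\pi_r\)) is never addressed.

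The paper closes this gap by going in the opposite direction: instead of pushing \(\rho\) down to a concrete C*-algebra, it lifts \(\rho\) up to the larger Banach \(*\)-algebra \(\mathscr B\). One first checks that the maps \(E_\pm\) defined by \((E_\pm\varphi)\check{\phantom|}(t,x)=\check\varphi(t,\pm|x|)\) are surjective \(*\)-homomorphisms \(\mathscr B\to\mathscr R\) (multiplicativity follows from \(\pi_\pm\circ E_\pm=\pi_\pm\) together with injectivity of \(\pi_r\) on \(\mathscr R\)). Then \(\rho\circ E_+\) is a representation of \(\mathscr B\) extending \(\rho\); it is irreducible and non-trivial because \(\rho\) is, so the classification of irreducible representations of \(\mathscr B\) in Proposition \ref{prop:bij_group_alg_rep} (which ultimately rests on the representation theory of the group \(\mathfrak R\), Proposition \ref{prop:univ_AB}) forces \(\rho\circ E_+\simeq\pi_\pm\), whence \(\rho\simeq\pi_\pm\restriction_{\mathscr R}=\pi_r\). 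If you want to salvage a C*-algebraic argument along your lines, you would have to work with the enveloping C*-algebra of \(\mathscr R\) (to which every \(*\)-representation does extend) and identify it, which again requires the classification of representations rather than yielding it.
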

\begin{proof}
The representation \(\pi_r\) has the same image as any of \(\pi_\pm\),
hence it is clearly irreducible. Conversely, let us first show that
the operators \(E_\pm\) on \(L^1(\mathbb R^2)\) defined by
\begin{equation}
(E_\pm\varphi)\check{\phantom |}(t,x)=\check\varphi(t,\pm|x|).
\end{equation}
are *-homomorphism from \(\mathscr B\) onto
its Banach *-subalgebra \(\mathscr R\). Surjectivity is obvious. 
To prove multiplicativity, namely 
\begin{equation}
E_\pm(\varphi\ast\psi)=(E_\pm\varphi)\ast(E_\pm\psi),
\end{equation}
we observe that, by the same argument of lemma \ref{lm:only_right_of_f}
(or using that \(f(T,R)=\pi_+(\hat f)\)), 
\(\pi_\pm\) fulfils \(\pi_\pm\circ E_\pm=\pi_\pm\); hence, 
\begin{align*}
\pi_r(E_\pm(\varphi\ast\psi))&=\pi_\pm(\varphi\ast\psi)=
\pi_\pm(\varphi)\pi_\pm(\psi)=\pi_\pm(E_\pm\varphi)\pi_\pm(E_\pm\psi)=\\
&=\pi_\pm((E_\pm\varphi)\ast(E_\pm\psi))=\pi_r((E_\pm\varphi)\ast(E_\pm\psi)),
\end{align*}
where in the last step we used that \((E_\pm\varphi)\ast(E_\pm\psi)\in\mathscr R\).
Again by the argument of lemma \ref{lm:only_right_of_f}, \(\pi_r\) is injective,
so that multiplicativity is proved. Finally, by 
\((\varphi^*)\check{\phantom{|}}=\overline{\check\varphi}\), it follows that
involutions are respected.

Now, let 
\(\pi\) be an irreducible, non trivial representation of \(\mathscr R\). 
For every \(\varphi\in\mathscr B\), set
\(\tilde\pi(\varphi)=\pi(E_-\varphi)\oplus\pi(E_+\varphi)\). From the preceding
remark it follows that \(\tilde\pi\) is a non trivial
representation of \(\mathscr B\). Hence we have \(\pi\circ E_\pm\simeq\pi_+\).
\end{proof}

\begin{proposition}\label{prop:compact}
Let \(\pi_r\) be the representation of the radial algebra \(\mathscr R\)
described in proposition \ref{radial_alg_rep}. Then \(\pi_r(\mathscr R)\)
is norm-dense in the C*-algebra
\(\mathcal K\) of compact operators on a separable, infinite dimensional 
Hilbert space.

In particular, for every 
\(f\in L^1(\mathbb R^2)\cap\widehat{L^1(\mathbb R^2)}\),
\(f(T,R)\) is a compact operator.
\end{proposition}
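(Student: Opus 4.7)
The plan is to combine an explicit Hilbert--Schmidt construction with the classical $C^*$-algebraic fact that an irreducible $\ast$-subalgebra of $B(\mathcal H)$ meeting $\mathcal K(\mathcal H)$ nontrivially has norm-closure containing all of $\mathcal K(\mathcal H)$ (see e.g.\ \cite[Cor.\ 4.1.10]{dixmierC}). I first exhibit a nonzero compact element of $\pi_r(\mathscr R)$: pick any nonzero $f\in C_c^\infty(\mathbb R\times(\mathbb R\setminus\{0\}))$ even in $r$. Then $\hat f\in\mathcal S\subset L^1(\mathbb R^2)$ is even in $\beta$, so $\hat f\in\mathscr R$, and since $\mathrm{supp}\,f$ is compact and bounded away from $\{r=0\}$ the condition $\int_{r>0}r^{-1}|f(t,r)|^2\,dt\,dr<\infty$ of lemma \ref{lm:HS} holds trivially, making $\pi_r(\hat f)=f(T,R)$ a nonzero Hilbert--Schmidt operator. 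Combined with the irreducibility of $\pi_r$ (proposition \ref{radial_alg_rep}), the cited theorem then yields $\mathcal K\subseteq\overline{\pi_r(\mathscr R)}$.

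For the reverse inclusion $\pi_r(\mathscr R)\subseteq\mathcal K$, observe that $\pi_r$ is $L^1$-norm continuous (indeed $\|\pi_r(\varphi)\|\leqslant\|\varphi\|_{L^1}$ since each $W(\alpha,\beta)$ is unitary), so $I:=\pi_r^{-1}(\mathcal K)$ is a norm-closed two-sided $\ast$-ideal of $\mathscr R$, nontrivial by the previous step. For arbitrary $\varphi\in\mathscr R$, I would approximate it in the $L^1$-norm by products $\varphi\star\psi_n$ with $(\psi_n)\subset I$ playing the role of an approximate identity for $\mathscr R$ built from Hilbert--Schmidt elements of the above form; then $\pi_r(\varphi\star\psi_n)=\pi_r(\varphi)\pi_r(\psi_n)$ is a product of a bounded and a compact operator, hence compact, and the operator-norm limit gives $\pi_r(\varphi)\in\mathcal K$. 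The ``in particular'' claim for $f\in L^1\cap\widehat{L^1}$ then follows because $f(T,R)=\pi_r(\hat f)$ once one has symmetrised $\hat f$ in $\beta$ using lemma \ref{lm:only_right_of_f} (which does not change the operator).

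The main obstacle I anticipate is producing the approximate identity $(\psi_n)\subset I$ for $\mathscr R$ in the $L^1$-norm. The naive scaled-bump approximate identity on $L^1(\mathfrak R)$ does \emph{not} lie in $I$, since its inverse Fourier-image does not vanish at $r=0$ and so violates the Hilbert--Schmidt criterion of lemma \ref{lm:HS}. The remedy is to invoke Cohen's factorisation theorem for closed ideals of Banach $\ast$-algebras with bounded approximate identity, and construct $(\psi_n)$ explicitly as convolutions of Hilbert--Schmidt elements of the form exhibited in the first paragraph, whose supports on the Fourier side are bounded away from $r=0$.
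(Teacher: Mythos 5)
The first half of your argument is fine, and in fact makes explicit something the paper leaves implicit: a single nonzero Hilbert--Schmidt element of $\pi_r(\mathscr R)$, together with irreducibility of $\pi_r$ and the cited corollary of Dixmier, does give $\mathcal K\subseteq\overline{\pi_r(\mathscr R)}$. The problem is the converse inclusion $\pi_r(\mathscr R)\subseteq\mathcal K$, which is where all the analytic content of the proposition sits, and there your argument is not yet a proof. Reducing the claim to ``find $(\psi_n)\subset I=\pi_r^{-1}(\mathcal K)$ with $\varphi\ast\psi_n\to\varphi$ in $L^1$'' is legitimate, but you never produce such a sequence. Cohen's factorisation theorem does not do it for you: applied to a closed ideal it presupposes that the ideal already possesses a bounded approximate identity, and even then it only factorises elements \emph{of} $I$ as products of elements of $I$; it says nothing about whether $I$ exhausts $\mathscr R$, which is exactly the point at issue. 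Likewise, star products of elements whose symbols are supported away from $\{r=0\}$ are again of that type, so iterating the construction of your first paragraph cannot yield functions concentrating at the group identity $(0,0)$ with unit mass. The circularity is intrinsic: any genuine approximate identity has $\check\psi_n$ tending to a nonzero constant, hence violating your support condition, so membership of $\psi_n$ in $I$ would have to be certified by some compactness criterion other than lemma \ref{lm:HS}, which you do not supply.

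The way out --- and what the paper actually does --- is to show directly that $I$ is dense in $\mathscr R$ (equivalently $I=\mathscr R$, since $I$ is closed), by approximating a total set of symbols rather than by multiplying with an approximate identity. Given $\varphi=\varphi_1\otimes\varphi_2$ with $\varphi_2$ even, one sets $f_n(t,r)=j(n|r|)\check\varphi(t,r)$, where $j$ is a bounded continuous profile with $j(|r|)\sim\sqrt{|r|}$ near $0$ and $j\equiv1$ for $r>1$. The square-root vanishing is exactly enough to make $\int_{r>0}r^{-1}|f_n|^2\,dt\,dr$ finite, so that $\hat f_n\in I$ by lemma \ref{lm:HS}, while being mild enough that $\hat f_n\to\varphi$ in $L^1$. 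A \emph{sharp} cutoff of the kind you envisage (symbol supported at distance $\geqslant 1/n$ from $\{r=0\}$) is dangerous precisely at this second step: truncating $\check\varphi$ abruptly at scale $1/n$ costs derivatives of order $n$, and the $L^1$ norm of the Fourier transform of the truncated symbol is then no longer under control. So the choice of the vanishing rate of $j$ is not a technicality but the actual content of the compactness statement; without it (or an equivalent density argument for $I$) your proof of $\pi_r(\mathscr R)\subseteq\mathcal K$, and hence of the ``in particular'' clause, is incomplete.
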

\begin{proof}
By \cite[VI.12(a)]{reed-simon_1}
it is sufficient to show that
\(\pi_r(\varphi)\) is compact for every \(\varphi\) in some total
subset of \(\mathscr R\); for example if 
\(\varphi=\varphi_1\otimes\varphi_2\) with \(\varphi_j\in 
L^1(\mathbb R)\cap L^2(\mathbb R)\) and \(\varphi_2\) even.
Since in general \(\|\pi_r(\varphi)\|\leqslant\|\varphi\|_{L^1}\), by 
\cite[VI.22(e)]{reed-simon_1} it is sufficient to show 
that there exists a sequence \((\varphi_n)\) in \(L^1\) such
that \(\lim_n\|\varphi_n-\varphi\|_{L^1}=0\) and \(\pi_r(\varphi_n)\)
is Hilbert-Schmidt. Let \(j\in\mathcal C_b(\mathbb R)\) be a bounded, positive
continuous function with \(\lim_{r\rightarrow 0}j(|r|)/\sqrt{|r|}=1\), 
\(0\leqslant j(|r|)\leqslant 1\), 
and \(j(r)=1,r>1\). We define 
\(f_n(t,r)=j(n|r|)\check\varphi(t,r)\).
By construction \(\lim_n\|f_n-\check\varphi\|_{L^\infty}=0\). The functions
\(f_n,\check\varphi\) are continuous and \(L^1\), hence the preceding remark 
implies \(\lim_n\|f_n-\check\varphi\|_{L^1}=0\).  Since the Fourier
transform is bounded as a linear map from \(L^1\) to \(L^\infty\)
(see e.g. \cite[Theorem IX.8]{reed-simon_1}), we also have
\(\lim_n\|\hat f_n-\varphi\|_{L^\infty}=0\), which again implies 
\(\lim_n\|\hat f_n-\varphi\|_{L^1}=0\). 
Hence we have our candidate sequence
\(\varphi_n=\hat f_n\). Finally,
\[
\int_{r>0} \frac{dt\;dr}{r}|f_n(t,r)|^2\leqslant\|\varphi_1\|_{L^2}
\left(\int_{0<r<\frac1n}\frac{dr}{r}|j(nr)\check\varphi_2(r)|^2+
n\|\varphi_2\|^2_{L^2}\right)<\infty.
\]
Hence each \(\pi_r(\varphi_n)=f_n(T,R)\) is Hilbert-Schmidt by lemma 
\ref{lm:HS}.
\end{proof}

\section{Cartesian vs Radial Weyl Quantisation}
\label{sec:cart_ves_radial}
In this section \(T,\bm X,\bm C,R\) are the operators of the universal
representation described in proposition \ref{def:univ_rep}, 
(we drop the apex \(u\), for simplicity). An easy generalisation 
of the argument of proposition \ref{prop:eureka} gives 
the Weyl operators 
\begin{equation}
e^{i(\alpha T+\bm{\beta X})}=
e^{i\alpha T}e^{i\frac{e^{\alpha}-1}{\alpha}\bm{\beta X}},
\quad(\alpha,\bm\beta)\in\mathbb R\times\mathbb R^d
\end{equation}
in \(d+1\) dimensions.

The Weyl operators, by definition, are the quantum replacement of Fourier
characters. Hence we mimic Weyl proposal \cite{weyl:1928}:
\begin{definition}
The Cartesian quantisation of the function 
\(f\in L^1(\mathbb R^{d+1})\cap \widehat{L^1(\mathbb R^{d+1})}\) 
is the operator 
\begin{equation}
f(T,\bm X)=\frac1{\sqrt{(2\pi)^{d+1}}}\int d\alpha\; d\bm\beta\;
\hat f(\alpha,\bm\beta)e^{i(\alpha T+\bm{\beta X})}.
\end{equation}
\end{definition}

We may define the star product by setting
\begin{equation}
(f\star g)(T,\bm X)=f(T,\bm X)g(T,\bm X),
\end{equation}
where the operator product is taken on the right hand side, and 
injectivity of the Weyl quantisation (with universal Weyl operators) is
used.

The above definition does not emphasise the radial nature of 
the quantisation. We give an alternative definition of quantisation
for an alternative class of symbols, called radial symbols, and we will compare the two descriptions.
\begin{definition}
Let \({\mathcal F}\in\mathcal C(S^{d-1}\sqcup\{\bm 0\},L^1(\mathbb R^2))\) a continuous
function of \(S^{d-1}\sqcup\{\bm 0\}\) with values in \(L^1(\mathbb R^2)\).
If, for any \(\bm c\in S^{d-1}\sqcup\{\bm 0\}\) fixed, we also have
\(\widehat{\mathcal F(\bm c)}\in L^1(\mathbb R^2)\), then
we call \({\mathcal F}\) a radial symbol. 

A radial symbol \(\mathcal F\) is said continuous at zero if\/ 
\(\lim_{r\rightarrow 0}\mathcal F(\bm c(r))(t,r)\)
exists and does not depend on the particular choice of the continuous 
\(S^{d-1}\)-valued function
\(\bm c(r)\) of \((0,\infty)\).

For any radial symbol \({\mathcal F}\), we define
\begin{subequations}
\begin{equation}
{\mathcal F}(\bm c)(T,R)=\frac1{2\pi} \int \widehat{\mathcal F}(\bm c)(\alpha,\beta)e^{i(\alpha T+\beta R)}.
\end{equation}

Finally we define the (universal) 
radial quantisation \({\mathcal F}(\bm C)(T,R)\) 
of \({\mathcal F}\) by means of
the obvious generalisation of the continuous functional calculus: with 
\(P(d\bm c)\) the joint spectral measure of the pairwise commuting operators 
\((C_1,\dotsc,C_n)\),
\begin{equation}
{\mathcal F}(\bm C)(T,R)=\int {\mathcal F}(\bm c)(T,R)P(d\bm c).
\end{equation}
\end{subequations}
\end{definition}
Note that the definition is well posed, since each \(C_j\) strongly commutes
with both \(T\) and \(R\). In particular joint spectral projections for 
\(\bm C\) commute with \({\mathcal F}(\bm c)(T,R)\).

The following remarks should not come as a surprise.
\begin{lemma}\label{prop:rad_cart_corresp}
For every \(f\in L^1(\mathbb R^{d+1})\cap\widehat{L^1(\mathbb R^{d+1})}\),
\[
\mathcal F^{f}(\bm c)(t,r)=f(t,r\bm c)
\] 
is a radial symbol continuous at zero, such that 
\begin{equation}\label{eq:ciprovo}
\mathcal F^f(\bm C)(T,R):=f(T,\bm X).
\end{equation}
\end{lemma}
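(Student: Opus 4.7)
The plan is to verify the three substantive claims---being a radial symbol, continuity at zero, and the operator identity---by reducing to fiberwise computations in the universal representation of Definition \ref{def:univ_rep}, with the operator identity resting on a Fourier projection--slice calculation.

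Since $\hat f\in L^1(\mathbb R^{d+1})$, a representative of $f$ is continuous and bounded, so $(t,r,\bm c)\mapsto f(t,r\bm c)$ is jointly continuous; continuity at zero is then immediate, as $\lim_{r\to 0}f(t,r\bm c(r))=f(t,\bm 0)$ for any $S^{d-1}$-valued curve $\bm c(r)$. For the $L^1(\mathbb R^2)$-valued continuity of $\bm c\mapsto \mathcal F^f(\bm c)$ I would apply dominated convergence to the representation of $f$ as the inverse Fourier transform of $\hat f$, while the $L^1$-ness of $\widehat{\mathcal F^f(\bm c)}$ will drop out of the slice identity used below.

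The core of the proof is the operator identity. In the universal representation $\bm C$ is diagonalised by the $\bm c$ coordinate, so fiberwise over $\bm c\in S^{d-1}$ one has $T=P$ and $\bm\beta\cdot\bm X|_{\bm c}=(\bm\beta\cdot\bm c)e^{-Q}$, whence
\[
f(T,\bm X)|_{\bm c} \;=\; \frac{1}{\sqrt{(2\pi)^{d+1}}}\int d\alpha\,d\bm\beta\,\hat f(\alpha,\bm\beta)\,e^{i(\alpha P+(\bm\beta\cdot\bm c)e^{-Q})}.
\]
Splitting $\bm\beta=\beta\bm c+\bm\beta_\perp$ with $\bm\beta_\perp\perp\bm c$, the Weyl character does not depend on $\bm\beta_\perp$, so integrating it out yields the projection--slice identity
\[
\int d\bm\beta_\perp\,\hat f(\alpha,\beta\bm c+\bm\beta_\perp) \;=\; (2\pi)^{(d-1)/2}\,\widehat{\mathcal F^f(\bm c)}(\alpha,\beta),
\]
and collecting the powers of $2\pi$ rewrites the fiber as $\frac{1}{2\pi}\int\widehat{\mathcal F^f(\bm c)}(\alpha,\beta)e^{i(\alpha P+\beta e^{-Q})}\,d\alpha\,d\beta=\mathcal F^f(\bm c)(P,e^{-Q})$, which is exactly the fiber of $\mathcal F^f(\bm C)(T,R)$. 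Over the trivial fiber $\bm c=\bm 0$ the Weyl character collapses to $e^{i\alpha P}$ and the same Fourier reduction identifies $f(T,\bm X)|_{\bm 0}$ with the one-variable quantisation of $t\mapsto f(t,\bm 0)=\mathcal F^f(\bm 0)$.

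The delicate point is the Fubini interchange required for the slice identity, since $\hat f\in L^1$ alone does not force $\bm\beta_\perp\mapsto\hat f(\alpha,\beta\bm c+\bm\beta_\perp)$ to be integrable for each individual $(\alpha,\beta,\bm c)$. I would therefore first establish the identity for $f$ in the Schwartz class $\mathscr S(\mathbb R^{d+1})$, where all pointwise manipulations are legitimate and $\mathcal F^f(\bm c)$ lies manifestly in $L^1\cap\widehat{L^1}$, and then extend by density, using the $L^1$-continuity bound $\|\pi_r(\varphi)\|\leqslant\|\varphi\|_{L^1}$ recorded earlier together with the continuity of $f\mapsto\hat f$ between the appropriate spaces to make both sides of the asserted equality continuous in $f$ for the natural topology on $L^1(\mathbb R^{d+1})\cap\widehat{L^1(\mathbb R^{d+1})}$.
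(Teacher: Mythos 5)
Your proposal is correct and follows essentially the same route as the paper: reduce fiberwise to the irreducible representations $(P,\bm c e^{-Q})$ and integrate out the components of $\bm\beta$ orthogonal to $\bm c$ via the projection--slice identity (the paper phrases this as a rotation $S\bm e_d=\bm c$ followed by integration over $\beta_1',\dotsc,\beta_{d-1}'$, and dismisses the rest as ``standard computations''). Your explicit treatment of the Fubini interchange via a Schwartz-class density argument, and of the trivial fibre $\bm c=\bm 0$, supplies detail the paper omits but does not change the argument.
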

Note that the above establishes a bijective correspondence between
cartesian symbols and radial symbols continuous at zero; in particular
to every  \(\mathcal F\) there corresponds \(f(t,\bm x)=
\mathcal F( |\bm x|^{-1}\bm x)(t,|\bm x|)\). 
\begin{proof}
By reduction theory, it is sufficient to prove
\eqref{eq:ciprovo}
in the irreducible case, where \(\bm C=\bm cI\) and \(\bm X^{(\bm c)}=\bm cR\).
Let \(S\in O(3)\) be such that \(S\bm e_d=\bm c\).
With the change of integration variables 
\(\bm\beta'=S^t\bm\beta\), we get
\[
f(T^{(\bm c)},\bm X^{(\bm c)})=\frac1{2\pi}\int d\alpha\;d\bm\beta\;
\hat f(\alpha,\bm\beta)e^{-i(\alpha T^{(\bm c)}+(\bm{\beta c})R)}=
\int\mathcal F(\bm c)(T^{(\bm c)},R^{(\bm c)}),
\]
where
\[
\widehat{\mathcal F(\bm c)}(\alpha,\beta_d')=\int 
d\beta_1'\dotsm d\beta_{d-1}'
\hat f(\alpha, S\bm\beta').
\]
Standard computations then yield \(\mathcal F=\mathcal F^f\).
\end{proof}

Of course, there is a radial star product defined by
\begin{equation}
{\mathcal F_1}(\bm C)(T,R){\mathcal F_2}(\bm C)(T,R)=
(\mathcal F_1\medstar\mathcal F_2)(\bm C)(T,R)
\end{equation}
which is intertwined with the cartesian star product by the above
correspondence:
\begin{equation}
\mathcal F^f\medstar\mathcal F^g=\mathcal F^{f\star g}.
\end{equation}

Although \(f(T,\bm X)=\mathcal F^f(\bm C)(T,R)\) 
is not compact any more because
of the amplification with infinite multiplicity, we still may define an 
unbounded trace on the algebra of radial symbols by
\begin{equation}
\tau_r(\mathcal F)=\int_{S^{d-1}}d\bm c\;\int\limits_{r>0} dt\;dr\frac1r
\mathcal F(\bm c)(t,r)=\int_{S^{d-1}}d\bm c\;\Tr(\mathcal F(\bm c)(T,R),
\end{equation}
which is finite on the symbols \(\mathcal F\) with \(\mathcal F/r\) is
summable.

It follows from the definition and proposition \ref{prop:irred_trace},
\begin{equation}
\tau_r(\mathcal F_1\medstar\mathcal F_2)=
\tau_r(\mathcal F_2\medstar\mathcal F_1),\quad
\tau_r(\overline{\mathcal F}\medstar\mathcal F)\geqslant 0.
\end{equation}

The above trace functional can be written also in terms of cartesian symbols:
\[
\tau_c(f)=\frac{2\pi^{d/2}}{\Gamma(d/2)}
\int dt\;d\bm x|\bm x|^{-d}f(t,\bm x),
\]
where \(\Gamma(z)=\int_0^\infty t^{z-1}e^{-t}dt\). 

If we now stick to the covariant representation described in proposition
\ref{prop:cov_coord}, we find
\[
U(g)f(T,\bm X)U(g)^{-1}=\rho_g(f)(T,\bm X),\quad g\in G_d,
\]
where, for \(g=(A,a,\lambda)\),
\begin{equation}\label{eq:action}
\rho_g(f)(t,\bm x)=f(t-a,\lambda^{-1}A^{-1}\bm x).
\end{equation}
The corresponding action on radial symbols
can be easily computed using lemma \ref{prop:rad_cart_corresp}. 

\section{The C*-Algebra and the Large Scale Limit}
\label{sec:C_star}
The most natural and canonical definition of C*-algebra of the 
\(\kappa\)-Minkowski spacetime is the smallest C*-algebra containing
all universal quantisations of cartesian symbols. 
\begin{definition}
Let \((T,\bm X)\) be the universal representation on \(\mathfrak H\).
The C*-algebra of the \(\kappa\)-Minkowski 
spacetime in \(d+1\) dimensions is defined as
\begin{equation}
\mathcal A_d=\{f(T,\bm X):f\in L^1(\mathbb R^{d+1})\cap 
\widehat{L^1(\mathbb R^{d+1})}\}
\overline{\phantom{k}}^{\|\cdot\|_{B(\mathfrak H)}}
\end{equation}
\end{definition}

From the discussion of the preceding section, it is clear that
\(\mathcal A_d\) can be obtained equivalently 
from the universal quantisations of
the radial symbols continuous at zero.

The next lemma will be crucial, 
in that it will both allow for the explicit
characterisation of \(\mathcal A_d\), and to dismiss the condition of 
continuity at zero of radial symbols. The two facts are related; the latter
will also be responsible of the exotic topology of the large scale limit.

\begin{lemma}
Let \(\mathcal C(S^{d-1},L^1(\mathscr R))\) be the Banach space
of continuous functions \(\Phi:S^{d-1}\mapsto \mathscr R \) 
with values in the radial algebra
\(\mathscr R\), and with norm
\[
\|\Phi\|=\sup_{\bm c\in S^{d-1}}\|\Phi(\bm c)\|_{\mathscr R}
\]
We denote by \(\hat{\mathcal F}\) the fibrewise Fourier transform on radial 
symbols, namely
\[
\hat{\mathcal F}(\bm c)(\alpha,\beta)=
\widehat{\mathcal F(\bm c)}(\alpha,\beta)=\frac1{2\pi}\int dt dr
\mathcal F(\bm c)(t,r)e^{-i(\alpha t +\beta r)}.
\]
Then the set of all fibrewise Fourier transforms of radial symbols continuous
at zero is dense in \(\mathcal C(S^{d-1},L^1(\mathscr R))\).
\end{lemma}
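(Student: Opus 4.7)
My plan is a density argument: reduce to finite tensor expansions, then restrict to tensor factors with vanishing $\beta$-marginals, for which the naive radial symbol is automatically continuous at zero.

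By compactness of $S^{d-1}$, the space $\mathcal{C}(S^{d-1},L^1(\mathscr{R}))$ is the sup-norm completion of the algebraic tensor product $\mathcal{C}(S^{d-1})\otimes\mathscr{R}$; combined with the $L^1$-density of Schwartz functions (even in $\beta$) in $\mathscr{R}$, it suffices to approximate finite sums $\Phi = \sum_{i=1}^{N} h_i\otimes\varphi_i$ with $h_i\in\mathcal{C}(S^{d-1})$ and $\varphi_i\in\mathscr{R}\cap\mathcal{S}(\mathbb{R}^2)$. For any such $\Phi$ the naive candidate
\begin{equation*}
\mathcal{F}_0(\bm c)(t,r) = \sum_i h_i(\bm c)\,\check\varphi_i(t,r)
\end{equation*}
has $\widehat{\mathcal{F}_0}=\Phi$ exactly; but $\mathcal{F}_0(\bm c)(t,0)=\sum_i h_i(\bm c)\check\varphi_i(t,0)$ is typically $\bm c$-dependent, so the continuity-at-zero condition fails in general.

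The crucial observation is that $\check\varphi_i(t,0)$ vanishes precisely when the $\beta$-marginal $\int\varphi_i(\alpha,\beta)\,d\beta$ vanishes for almost every $\alpha$. I therefore plan to restrict the approximation to tensor sums whose factors $\varphi_i$ all have vanishing $\beta$-marginal: for these, $\mathcal{F}_0(\bm c)(t,0)\equiv 0$ uniformly in $\bm c$, so $\mathcal{F}_0$ is trivially continuous at zero and already meets $\widehat{\mathcal{F}_0}=\Phi$ exactly, no correction needed.

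The main obstacle is then to establish that such zero-marginal tensor sums are dense in $\mathcal{C}(S^{d-1},L^1(\mathscr{R}))$. The marginal map $\varphi\mapsto\int\varphi(\,\cdot\,,\beta)\,d\beta$ is bounded as a map into $L^1(\mathbb{R})$ but is certainly not trivial on $\mathscr{R}$, so the zero-marginal subspace is not $L^1$-dense in $\mathscr{R}$ on its own. I expect to circumvent this by first peeling off the $\bm c$-average of $\Phi$ --- an element $\Phi_0\in\mathscr{R}$ handled by the trivially continuous-at-zero $\bm c$-independent radial symbol $\check\Phi_0$ --- and then approximating the residual $\Phi-\Phi_0$ (with vanishing $\bm c$-average and hence controllable marginal oscillation) by zero-marginal tensors through a spreading-in-$\beta$ argument: redistributing each $\varphi_i$'s marginal over an arbitrarily wide $\beta$-interval costs only an arbitrarily small $L^1$-perturbation of $\varphi_i$ while driving the marginal to zero in the limit. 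Putting these two reductions together produces a radial symbol continuous at zero whose fibrewise Fourier transform approximates $\Phi$ in the sup-$L^1$ norm.
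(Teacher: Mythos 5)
Your reduction to finite tensor sums and your identification of the obstruction --- the boundary value $\mathcal F_0(\bm c)(\cdot,0)$, equivalently the $\beta$-marginal $m_\varphi(\alpha)=\int\varphi(\alpha,\beta)\,d\beta$ of each fibre --- are correct and match the paper's starting point. The gap is the final ``spreading-in-$\beta$'' step, which cannot work. The marginal map $\varphi\mapsto m_\varphi$ is a norm-one surjection of $L^1(\mathbb R^2)$ onto $L^1(\mathbb R)$, so for \emph{every} zero-marginal $\psi$ one has, by Fubini and the triangle inequality,
\[
\|\varphi-\psi\|_{L^1(\mathbb R^2)}\;\geqslant\;\Bigl\|\int(\varphi-\psi)(\cdot,\beta)\,d\beta\Bigr\|_{L^1(d\alpha)}=\|m_\varphi\|_{L^1(d\alpha)}.
\]
Redistributing the marginal over a wide $\beta$-interval (subtracting $m_\varphi(\alpha)\tfrac1{2N}\chi_{[-N,N]}(\beta)$, say) makes the correction small pointwise and in every $L^p$ with $p>1$, but its $L^1$ norm equals $\|m_\varphi\|_1$ for every $N$; no choice of correction does better. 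Peeling off the $\bm c$-average does not remove the obstruction either: for $\Phi=h\otimes\varphi$ with $\int_{S^{d-1}}h\,d\bm c=0$, $h$ nonconstant and $m_\varphi\neq0$, pairing the fibre at $\bm c$ with $\operatorname{sgn}m_\varphi(\alpha)\in L^\infty(\mathbb R^2)$ shows that any approximant of the form (constant in $\bm c$) plus (zero-marginal tensors) stays at distance at least $\tfrac12(\max h-\min h)\,\|m_\varphi\|_{1}>0$ from $\Phi$ in the $\sup$-$L^1$ norm. So your approximating class is at a fixed positive distance from generic tensors and the argument cannot close.

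Note also that the rigidity you ran into is intrinsic to the Fourier side: since $\int\hat{\mathcal F}(\bm c)(\alpha,\beta)\,d\beta$ is, up to normalisation, the one-dimensional Fourier transform of the boundary value $\mathcal F(\bm c)(\cdot,0)$, continuity at zero forces the fibrewise marginal of every admissible $\hat{\mathcal F}$ to be independent of $\bm c$. Any proof therefore has to give up the exact identity $\widehat{\mathcal F_0}=\Phi$ and perturb the \emph{symbol} rather than its Fourier transform. This is what the paper does: it multiplies $\mathcal F(\bm c)(t,r)$ by the cutoff $j(n|r|)$ of proposition \ref{prop:compact}, which vanishes at $r=0$ (so the modified symbol is continuous at zero with limit $0$) and equals $1$ for $|r|>1/n$, so that the symbol is altered only on the strip $|r|<1/n$; the smallness of this perturbation lives naturally in $L^1(dt\,dr)$ and is transported to the Fourier side via the $L^1\to L^\infty$ bound, i.e.\ in a topology weaker than the one your exact-tensor strategy requires. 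To salvage your approach you would have to either measure closeness in such a weaker topology (ultimately the operator norm of $\pi_r$, which is what is actually used in the sequel), or restrict to targets whose fibrewise marginals are already $\bm c$-independent; in the stated $\sup$-$L^1$ norm the reduction to zero-marginal factors cannot be completed.
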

\begin{proof}
Since removing the request of continuity at zero the set 
\(\{\hat{\mathcal F}\}\)
is dense, it is sufficient to show that, for any radial symbol
\(\mathcal F\) (possibly not continuous at zero), there exists a sequence
\((\mathcal F_n)\) of radial symbols continuous at zero such that
\(\lim_n\|\hat{\mathcal F}-\hat{\mathcal F}_n\|=0\). It is sufficient to 
further restrict ourselves to the case where \(\mathcal F\) is of the form
\(\mathcal F(\bm c)(t,r)=f(\bm c)g(t,r)\), since they provide, through
Fourier transformation, a total set in \(\mathcal C(S^{d-1},L^1(\mathscr R))\).
Let \(j\) be defined as in the proof of proposition \ref{prop:compact},
and \(\mathcal F_n(\bm c)(t,r)=f(\bm c)j(n|r|)(g(t,r)\). Then the proof
follows by the same arguments used to prove proposition \ref{prop:compact}.
\end{proof}

Combining the preceding lemma with propositions \ref{radial_alg_rep},
\ref{prop:compact}, we see that the Banach *-algebra 
\(\mathcal C(S^{d-1},\mathscr R)\) (equipped with pointwise product
in \(\mathscr R\)) 
has a unique C*-completion \(\mathcal C(S^{d-1},\mathcal K)\). As a consequence,
\begin{proposition}
The C*-algebra \(\mathcal A_d\) is isomorphic with 
\(\mathcal C(S^{d-1},\mathcal K)\oplus \mathcal C_\infty(\mathbb R)\).
The action \eqref{eq:action} of \(G_d\) 
extends to an action by automorphism of \(\mathcal A_d\),
still denoted by \(\rho\).
\end{proposition}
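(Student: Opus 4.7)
The plan is to exploit the block decomposition of the universal representation. Since the measure on $S^{d-1}\sqcup\{\bm 0\}$ splits as $d\mu = d\bm c + \delta_{\bm 0}$, the Hilbert space decomposes as $\mathfrak H^u = \mathfrak H_s \oplus \mathfrak H_0$, with $\mathfrak H_s = L^2(S^{d-1})\otimes L^2(\mathbb R)$ carrying the non-trivial fibres and $\mathfrak H_0 = L^2(\mathbb R)$ carrying the trivial one (on which $\bm X = 0$, $T = P$). All universal Weyl operators are block-diagonal in this decomposition, hence so is $\mathcal A_d$. Writing $\rho_s,\rho_0$ for the two compressions, I obtain a faithful embedding $\mathcal A_d \hookrightarrow B_s \oplus B_0$ with $B_i = \rho_i(\mathcal A_d)$; the task splits into identifying the $B_i$ with the target factors and verifying surjectivity.

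The two identifications follow from the preceding machinery. On $\mathfrak H_0$ one has $f(T, \bm X) = h(P)$ with $h = f(\cdot, \bm 0) \in \mathcal C_\infty(\mathbb R)$, unitarily equivalent via Fourier to multiplication by $h$; taking product symbols $f(t,\bm x)=h(t)\phi(\bm x)$ with $\phi$ Schwartz and $\phi(\bm 0)=1$ realises every $h\in\mathcal C_\infty(\mathbb R)$, so $B_0 = \mathcal C_\infty(\mathbb R)$. On $\mathfrak H_s$, Lemma \ref{prop:rad_cart_corresp} identifies $\rho_s(f(T,\bm X))$ with the continuous section $\bm c \mapsto \pi_r(\widehat{\mathcal F^f(\bm c)})$, each fibre compact by Proposition \ref{prop:compact}. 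The preceding lemma delivers norm-density of such sections in $\mathcal C(S^{d-1},L^1(\mathscr R))$; fibrewise application of $\pi_r$ combined with norm-density of $\pi_r(\mathscr R)$ in $\mathcal K$ (Proposition \ref{prop:compact} again, plus a partition-of-unity argument to lift density of values to density of sections) transports this to norm-density in $\mathcal C(S^{d-1},\mathcal K)$. Hence $B_s = \mathcal C(S^{d-1},\mathcal K)$.

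The heart of the proof is surjectivity of the embedding $\mathcal A_d \hookrightarrow B_s \oplus B_0$. Setting $K=\ker\rho_0$, it suffices to show $\rho_s(K)=\mathcal C(S^{d-1},\mathcal K)$: then given any $(\Phi,h)$ one lifts $h$ to some $a'\in\mathcal A_d$ and corrects $\rho_s(a')$ to $\Phi$ by adding a suitable $k\in K$. Now $\rho_s(K)$ is a closed ideal of $\mathcal C(S^{d-1},\mathcal K)$ (images of closed ideals under C*-morphisms are always closed), hence of the form $I_F=\{\Phi:\Phi|_F=0\}$ for a closed $F\subset S^{d-1}$. To force $F=\emptyset$, for each $\bm c_0\in S^{d-1}$ I would exhibit a Schwartz symbol with $f(t,\bm 0)\equiv 0$ but $\rho_s(f)(\bm c_0)\neq 0$; a concrete choice is $f(t,\bm x)=(\bm c_0\cdot\bm x)\,g(t)\,e^{-|\bm x|^2/2}$, whose radial fibre at $\bm c_0$ is $\mathcal F^f(\bm c_0)(t,r)=r\,g(t)\,e^{-r^2/2}$, nonzero for $r>0$, so $\rho_s(f)(\bm c_0)=\mathcal F^f(\bm c_0)(T,R)\neq 0$ by Lemma \ref{lm:only_right_of_f}. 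This ideal-separation step is the main technical obstacle, since it needs a specific construction inside the symbol space rather than a purely abstract argument.

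Covariance is then essentially automatic: from Proposition \ref{prop:cov_coord}, $\mathrm{Ad}\,U(g)$ sends each generator $f(T,\bm X)$ to $\rho_g(f)(T,\bm X)$, and $\rho_g(f)\in L^1\cap\widehat{L^1}$ since translations and invertible linear substitutions preserve this intersection. Hence $\mathrm{Ad}\,U(g)$ maps the generating set of $\mathcal A_d$ into itself, extending by norm continuity to a $*$-automorphism of $\mathcal A_d$; the group law $\rho_g\rho_{g'}=\rho_{gg'}$ is inherited from $U$ being a unitary representation of $G_d$.
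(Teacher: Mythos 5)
Your proof is correct and follows the same overall route the paper intends: split the universal representation into its non-trivial and trivial blocks, identify the trivial compression with $\mathcal C_\infty(\mathbb R)$ via $f(T,\bm X)|_{\mathfrak H_0}=f(\cdot,\bm 0)(P)$, and identify the non-trivial compression with $\mathcal C(S^{d-1},\mathcal K)$ by combining the density lemma with Propositions \ref{radial_alg_rep} and \ref{prop:compact}. Where you genuinely go beyond the paper---whose proof is essentially the one-line deduction ``as a consequence''---is your third paragraph: you correctly note that surjectivity of both compressions does not by itself yield the direct sum (the diagonal of $\mathcal K\oplus\mathcal K$ is the standard counterexample), and you close this gap by showing $\rho_s(\ker\rho_0)$ is a closed ideal of $\mathcal C(S^{d-1},\mathcal K)$, hence of the form $I_F$, and then exhibiting the symbol $(\bm c_0\cdot\bm x)\,g(t)\,e^{-|\bm x|^2/2}$, which vanishes on the time axis but has non-vanishing radial fibre at $\bm c_0$ by Lemma \ref{lm:only_right_of_f}, forcing $F=\emptyset$. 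This is exactly the step that needs to be supplied, and your execution is sound. The only point I would tighten is the covariance paragraph: $\mathrm{Ad}\,U(g)$ lives in the amplified covariant representation of Proposition \ref{prop:cov_coord}, so to conclude that $f(T,\bm X)\mapsto\rho_g(f)(T,\bm X)$ is isometric for the norm of $\mathcal A_d$ (defined via the universal representation) you should add one line noting either that the covariant representation is a direct integral of copies of the universal one (hence induces the same norm), or directly that $\rho_g$ permutes the unitary equivalence classes of irreducible representations ($\bm c\mapsto A\bm c$ on $S^{d-1}$, with the dilation and translation absorbed by canonical transformations), so the supremum defining the universal norm is preserved.
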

Note that the commutative 
component \(\mathcal C_\infty(\mathbb R)\) shows up in the universal 
C*-completion because of the presence of continuously many
equivalence classes of irreducible trivial representations.

We may then extend the quantisation of a function 
\(f\in\mathcal C(\overset{\circ}{M}{}^{(1,d)}\sqcup \mathbb R)\simeq 
\mathcal C(\overset{\circ}{M}{}^{(1,d)})\oplus\mathcal C_\infty(\mathbb R)\)
according to the above proposition. 
It is then clear that the large \(\kappa\) limit (which is the same
as the large scale limit), can be performed separately in the trivial
and non trivial component and gives 
\(\mathcal C(\overset{\circ}{M}{}^{(1,d)}\sqcup \mathbb R))\).

\section{Remarks on the Uncertainty Relations}
\label{sec:uncertainty}

We now consider the Heisenberg uncertainty relations arising from the 
commutation relations between time \(T\) and radial distance \(R>0\),
which read
\begin{equation}
c\Delta_\omega(T)\Delta_\omega(R)\geqslant\frac{c}{2}\omega(|[T,R]|)=
\frac{1}{2\kappa}\omega(R).
\end{equation}
for any state \(\omega\)in the domain of \(T,R\), where \(c\) is light speed. 
In this section, we restore the explicit dependence on \(c,\kappa\).

We may observe that, for a state whose expectation on \(X\) is 
very close to the origin (namely \(\omega(R)\) very ``small''), 
the above uncertainty 
relations tend to have little content.  Indeed, 
\begin{lemma}\label{lemma:uncertanties}
For every \(\varepsilon,\eta>0\) there exists a non trivial 
pure vector state in the domain of \(T,R\), such that
\[
\Delta_\omega(T)<\varepsilon,\quad \Delta_\omega(X)<\eta.
\] 
The state can be chosen to belong to the non trivial component.
\end{lemma}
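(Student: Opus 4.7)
The plan is to exhibit the desired state explicitly inside a non trivial irreducible component of the universal representation. By Proposition \ref{prop:univ_AB}, for any fixed direction $\bm c\in S^{d-1}$ the operators $(T,\bm X)$ restrict to $(P,\bm c\,e^{-Q})$ on $L^2(\mathbb R)$; since $|\bm c|=1$ one has
\[
\Delta_\omega(\bm X)^2=\sum_j c_j^{\,2}\,\Delta_\omega(e^{-Q})^2=\Delta_\omega(R)^2,
\]
so it suffices to produce a unit vector $\xi\in L^2(\mathbb R)$ in the common domain of $P$ and $e^{-Q}$ with $\Delta_\xi(P)<\varepsilon$ and $\Delta_\xi(e^{-Q})<\eta$; such a $\xi$ automatically defines a non trivial pure vector state in the required domain.

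The natural ansatz is the translated Gaussian
\[
\xi_{\sigma,a}(s)=(\pi\sigma^2)^{-1/4}\,e^{-(s-a)^2/(2\sigma^2)},\qquad \sigma>0,\ a\in\mathbb R,
\]
which is Schwartz, hence in the domain of $P$ and of $e^{-\lambda Q}$ for every real $\lambda$. Two elementary Gaussian computations are then needed. First, translation invariance of $P$ gives the standard $\Delta_{\xi_{\sigma,a}}(P)^2=1/(2\sigma^2)$, depending only on $\sigma$. Second, completing the square in $s$ yields
\[
\langle \xi_{\sigma,a},e^{-\lambda Q}\xi_{\sigma,a}\rangle=e^{-\lambda a+\lambda^2\sigma^2/4},\qquad \lambda\in\mathbb R,
\]
so that, specialising to $\lambda=1,2$,
\[
\Delta_{\xi_{\sigma,a}}(e^{-Q})^2=e^{-2a+\sigma^2}-e^{-2a+\sigma^2/2}=e^{-2a}\bigl(e^{\sigma^2}-e^{\sigma^2/2}\bigr).
\]

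The conclusion follows by a two-step choice of parameters. First pick $\sigma$ so large that $1/(\sigma\sqrt 2)<\varepsilon$, which forces $\Delta_{\xi_{\sigma,a}}(P)<\varepsilon$ uniformly in $a$. With this $\sigma$ frozen, the right-hand side of the previous display depends only on $a$ and tends to $0$ as $a\to+\infty$, so a sufficiently large $a$ enforces $\Delta_{\xi_{\sigma,a}}(e^{-Q})<\eta$ as well.

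There is essentially no obstacle here; the conceptual point is only that the Heisenberg bound $\Delta(T)\Delta(R)\geqslant\tfrac1{2\kappa}\omega(R)$ becomes vacuous when $\omega(R)$ itself is small, and that a sufficient positive translation in $Q$ makes $\omega(e^{-Q})$ exponentially small while leaving the momentum spread unchanged. Restoring the factors of $c$ and $\kappa$ is a cosmetic rescaling.
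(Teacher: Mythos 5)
Your proof is correct and follows essentially the same route as the paper: both arguments exploit that \(\Delta(P)\) is translation invariant while a large positive translation in \(Q\) drives \(\Delta(e^{-Q})\) to zero, the paper using smooth compactly supported wave packets where you use Gaussians with explicit moment formulas. (One pedantic note: ``Schwartz, hence in the domain of \(e^{-\lambda Q}\)'' is not a valid implication in general, but your explicit finite value for \(\langle e^{-\lambda Q}\rangle\) shows the Gaussian is indeed in the domain, so nothing is missing.)
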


This means that there is no limit on the precision with which
we may simultaneously localise all the spacetime coordinates, 
at least in the region close to the space origin. 
This is in plain contrast with the standard 
motivations for spacetime quantisation, namely to prevent the formation
of closed horizons as an effect of localisation {\itshape alone} (see 
\cite{Doplicher:1994tu}).The proof will be found at the end of this section.

The above is yet another way of saying that the model is approximately 
commutative close to the origin, while noncommutative effects grow 
the more 
important, the more the states we consider are 
localised far away from the origin. This suggests that, on the other side,
it could be of some interest to have estimates about 
how fast noncommutativity grows with distance. 

Indeed, for a state localised at distance  \(\omega(R)=L\) from the origin, 
we may rewrite the uncertainty relations as
\begin{equation}
L\leqslant 2\kappa c\Delta T\Delta R
\end{equation}
(note that \(\kappa>0\)).

We may ask ourselves at which distance from the centre
of \(\kappa\)-Minkowski is is meaningful to speak of strong interactions. 
Asking for
\(\Delta R\sim c\Delta T\ll 10^{-19}m\) we find (if \(\kappa\sim 10^{35}m\)) 
\begin{equation}
L\ll 10^{-3}m.
\end{equation}
The peak nominal beam size at LHC is \(1.2\cdot 10^{-3}\;\)m \cite[2.2.2]{Evans:2008zzb}.

The diametre of the classical orbit of the electron in the fundamental
state of the Hydrogen atom is around \(\ell_0=.5\cdot 10^{-10}m\); 
the classical period of the orbit is 
\(\tau_0\sim 1.5\cdot 10^{-16}m\). 
We take \(\tau_0,\ell_0\) as characteristic scales of
atomic physics, and we ask for the avalability of 
localisation states which
are not so undetermined to destroy the meaning of atomic physics, namely 
\(\Delta T\ll \tau_0,\Delta R\ll\ell_0\);
we find
\begin{equation}
L\ll \kappa c\tau_0\ell_0\sim \kappa \cdot (10^{-18}m). 
\end{equation}
With \(\kappa\sim 10^{35}m\), namely of order of the inverse 
Planck length, we find
\begin{equation}
L\ll 10^{17}m\sim 10\; \text{light--years}.
\end{equation}
The Galaxie is \(10^3\) light--years thick; \(\alpha\)-Centauri is
five light--years from Earth.

We could also turn things the other way round. We may ask LHC physics
to exist and be the same no matter 
where the LHC is built on Earth. Since the diametre of the Earth is 
\(L\sim 10^7m\), the condition \(c\Delta T\Delta R\ll 10^{-38}m^2\) gives
\begin{equation}
\frac 1\kappa\ll 10^{-45}m
\end{equation}
which is a less than a billionth of the Planck length. 

Of course, it could be objected to such estimates that the model only
should be intended ``locally'', and that large distance effects could be 
``cut off''. It is however not clear how to do this. Commutation relations
are global, and the unavailability of a suitable generalisation of the 
concept of locality is precisely the obstruction preventing us from
going beyond semiclassical models of flat spacetime, which are globally 
defined. For the model as it stands, the above estimates are meaningful.

We now come to the discussion which is summarised by lemma
\ref{lemma:uncertanties} above.

For any state which is pure and belongs to 
the trivial component (via GNS), both \(T,R\) have null uncertainty,
since the trivial component is commutative. Hence the lemma is trivially true
in this case.

It may seem however that the above is a pathology due to the special status
of the origin. What about states localised 
``very close'' to the 
origin, but not precisely there?   
Let \(\varepsilon,\eta>0\) be any arbitrary choice of (``small'')
positive numbers. There always is
a choice of \(\xi^{\varepsilon}\in\mathscr D(P)\) derivable and 
with compact support such that \(\Delta_{\xi^\varepsilon}(P)<\varepsilon\). 
For such a choice, let 
\(\xi^\varepsilon_\lambda(s)=(e^{-i\lambda P}\xi^\varepsilon)(s)=\xi^\varepsilon(s-\lambda)\); since unitary transformations preserve uncertainties,
\(\Delta_{\xi^\varepsilon_\lambda}(P)<\varepsilon\) for every 
\(\lambda\). But we also have \(\xi_\lambda^\varepsilon\in\mathscr D(e^{-Q})\);
moreover, as a consequence of the compactness of the support,
\[
\lim_{\lambda\rightarrow\infty} \Delta_{\xi^\varepsilon_\lambda}(e^{-Q})=0.
\]
In particular, there is a \(\lambda_\eta\) such that
\(\Delta_{\xi^\varepsilon_{\lambda_\eta}}(e^{-Q})<\eta\).
We found a state not beloging to the trivial component, and such that
\[
\Delta(T)<\varepsilon,\quad \Delta(R)<\eta.
\]

\section{Conclusions and Outlook}

On the mathematical side, we found that the C*-algebra of the
\(\kappa\)-Minkowski model and its representations can
be discussed thoroughly, leading to a sound quantisation prescription,
which is canonically associated with  the abstract Lie 
algebra underlying the relations. 

On the side of interpretation, on the contrary, we observed some
features which (together with other evident remarks which we also collect here for completeness) are not fully satisfactory from the point of view of spacetime
quantisation. 
\begin{itemize}
\item The main motivation for 
spacetime quantisation, namely to prevent
arbitrarily precise localisation (which could lead to horizon formation)
is lost for this model. 
\item Covariance under Lorentz boosts is severely broken.
\item Moreover, translation covariance is so severely broken that the origin
of space, which already got a special status from the relations, remains 
classical at \(\kappa\neq 0\), and remains topologically disjoint from the 
rest in the large scale limit.
\item
The model is classical in the origin and grows noncommutative very fastly
as the distance from the origin increases. 
\end{itemize}

We also have seen that the C*-algebra and the quantisation prescription can be
derived from the canonical CCR quantisation. In particular, this means that
deformed Lorentz covariance could be easily established by exploiting the 
twisted covariance of \cite{Chaichian:2004za,Wess:2003da}. We discuss this feature
in \cite{Dabrowski:2009mw}, where we also show that the 
deformed--covariant \(\kappa\)-Minkowski spacetime 
can be obtained as a non invariant
restriction of a fully covariant model (thus reproducing the situation 
described in \cite{Piacitelli:2009tb,Piacitelli:2009fa}). 
The fully covariant model will be 
obtained as a minimal central covariantisation of the usual
\(\kappa\)-Minkowski model. This will hopefully 
shed some light on noncommutative
covariance, but unfortunately will not cure the lack of stability of spacetime
under localisation alone, which will survive covariantisation. By the same 
techniques we may obtain a fully Poincar\'e covariant model. Once again,
the initial \(\kappa\)-Minkowski is contained in the covariantised
model as a subrepresentation; hence the states with sharp localisation 
still will be available, and will be localisable everywhere.

\footnotesize
\bibliographystyle{utphys}
\bibliography{mybibtex}

\providecommand{\href}[2]{#2}\begingroup\raggedright\begin{thebibliography}{10}

\bibitem{Lukierski:1991pn}
J.~Lukierski, H.~Ruegg, A.~Nowicki, and V.~N. Tolstoi, ``{\(q\)-deformation of
  Poincare algebra},''
\href{http://dx.doi.org/10.1016/0370-2693(91)90358-W}{{\em Phys. Lett.} {\bf
  B264} (1991)  331--338}.

\bibitem{Majid:1994cy}
S.~Majid and H.~Ruegg, ``{Bicrossproduct structure of $\kappa$-Poincare group
  and noncommutative geometry},''
  \href{http://dx.doi.org/10.1016/0370-2693(94)90699-8}{{\em Phys. Lett.} {\bf
  B334} (1994)  348--354},
\href{http://arxiv.org/abs/hep-th/9405107}{{\tt arXiv:hep-th/9405107}}.

\bibitem{Agostini:2005mf}
A.~Agostini, ``{kappa-Minkowski operators on Hilbert spaces},''
  \href{http://dx.doi.org/10.1063/1.2738360}{{\em J. Math. Phys.} {\bf 48}
  (2007)  052305},
\href{http://arxiv.org/abs/hep-th/0512114}{{\tt arXiv:hep-th/0512114}}.

\bibitem{Kosinski:1999dw}
P.~Kosinski, J.~Lukierski, and P.~Maslanka, ``{Local field theory on
  kappa-Minkowski space, star products and noncommutative translations},''
  \href{http://dx.doi.org/10.1023/A:1022821310096}{{\em Czech. J. Phys.} {\bf
  50} (2000)  1283--1290},
\href{http://arxiv.org/abs/hep-th/0009120}{{\tt arXiv:hep-th/0009120}}.

\bibitem{weyl:1928}
H.~Weyl, {\em Gruppentheorie und Quantenmechanik}.
\newblock Hirzel, Leipzig, 1928.

\bibitem{vonNeumann:1931}
J.~Neumann, von, ``Uber die eindeutigkeit der \protect{S}chr\"odingerschen
  operatoren,'' {\em Math.\ Annalen} {\bf 104} (1931)  570--578,
  \href{http://resolver.sub.uni-goettingen.de/purl?GDZPPN002274701}{{\tt
  GDZ:purl?GDZPPN002274701}}.

\bibitem{Agostini:2002de}
A.~Agostini, F.~Lizzi, and A.~Zampini, ``{Generalized Weyl systems and
  kappa-Minkowski space},''
  \href{http://dx.doi.org/10.1142/S021773230200871X}{{\em Mod. Phys. Lett.}
  {\bf A17} (2002)  2105--2126},
\href{http://arxiv.org/abs/hep-th/0209174}{{\tt arXiv:hep-th/0209174}}.

\bibitem{Dabrowski:2009hv}
L.~Dabrowski and G.~Piacitelli, ``{The k-Minkowski Spacetime: Trace, Classical
  Limit and uncertainty Relations},'' in Kowalski-Glikman {\em et.~al.}
  \cite{wroclaw_XXV}.
\newblock
\href{http://arxiv.org/abs/0909.3215}{{\tt arXiv:0909.3215 [hep-th]}}.
\newblock

\bibitem{Doplicher:1994tu}
S.~Doplicher, K.~Fredenhagen, and J.~E. Roberts, ``{The Quantum structure of
  space-time at the Planck scale and quantum fields},''
  \href{http://dx.doi.org/10.1007/BF02104515}{{\em Commun. Math. Phys.} {\bf
  172} (1995)  187--220},
\href{http://arxiv.org/abs/hep-th/0303037}{{\tt arXiv:hep-th/0303037}}.

\bibitem{reed-simon_1}
M.~Reed and B.~Simon, {\em Modern Methods of Mathematical Physics}, vol.~I:
  Functional Analysis.
\newblock Academic Press, New York, 1972.

\bibitem{GraciaBondia:2001ct}
J.~M. Gracia-Bondia, F.~Lizzi, G.~Marmo, and P.~Vitale, ``{Infinitely many star
  products to play with},''
  \href{http://dx.doi.org/10.1088/1126-6708/2002/04/026}{{\em JHEP} {\bf 04}
  (2002)  026},
\href{http://arxiv.org/abs/hep-th/0112092}{{\tt arXiv:hep-th/0112092}}.

\bibitem{stein:1993aaa}
E.~Stein, {\em Harmonic Analysis: Real--Variable Methods, Orthogonality, and
  Oscillatory Integrals}.
\newblock Princeton University Press, Princeton (NJ), 1993.

\bibitem{singer}
I.~M. Singer, ``Uniformly continuous representations of \protect{L}ie groups,''
  {\em Ann.\ Math.} {\bf 56} (1952)  242--247,
  \href{http://www.jstor.org/stable/1969797}{{\tt JSTOR:1969797}}.

\bibitem{loomis}
L.~H. Loomis, {\em An Introduction to Abstract Harmonic Analysis}.
\newblock van Nostrand Company, Inc., Princeton (NJ), 1954.

\bibitem{dixmierC}
J.~Dixmier, {\em C*-algebras}.
\newblock North Holland P.C., Amsterdam, New York, Oxford, 1977.

\bibitem{Evans:2008zzb}
L.~Evans and P.~Bryant, ``{LHC Machine},''
\href{http://dx.doi.org/10.1088/1748-0221/3/08/S08001}{{\em JINST} {\bf 3}
  (2008)  S08001}.

\bibitem{Chaichian:2004za}
M.~Chaichian, P.~P. Kulish, K.~Nishijima, and A.~Tureanu, ``{On a
  Lorentz-invariant interpretation of noncommutative space-time and its
  implications on noncommutative QFT},''
  \href{http://dx.doi.org/10.1016/j.physletb.2004.10.045}{{\em Phys. Lett.}
  {\bf B604} (2004)  98--102},
\href{http://arxiv.org/abs/hep-th/0408069}{{\tt arXiv:hep-th/0408069}}.

\bibitem{Wess:2003da}
J.~Wess, ``{Deformed coordinate spaces: Derivatives},'' in {\em {Mathematical,
  Theoretical and Phenomenological Challenges Beyond the Standard Model.
  Perspectives of the Balcan Collaboration, 29 August~--~2 Se ptember 2003,
  Vrnja{\v c}ka Banja}}, G.~Djordjevic and J.~Wess, eds.
\newblock Word Scientific, 2005.
\newblock
\href{http://arxiv.org/abs/hep-th/0408080}{{\tt arXiv:hep-th/0408080}}.
\newblock

\bibitem{Dabrowski:2009mw}
L.~Dabrowski, M.~Godlinski, and G.~Piacitelli, ``{Lorentz Covariant k-Minkowski
  Spacetime},''
\href{http://arxiv.org/abs/0912.5451}{{\tt arXiv:0912.5451 [hep-th]}}.

\bibitem{Piacitelli:2009tb}
G.~Piacitelli, ``{Twisted Covariance and Weyl Quantisation},'' in
  Kowalski-Glikman {\em et.~al.} \cite{wroclaw_XXV}.
\newblock
\href{http://arxiv.org/abs/0901.3109}{{\tt arXiv:0901.3109 [hep-th]}}.
\newblock

\bibitem{Piacitelli:2009fa}
G.~Piacitelli, ``{Twisted Covariance as a Non Invariant Restriction of the
  Fully Covariant DFR Model},''
  \href{http://dx.doi.org/10.1007/s00220-010-0988-9}{{\em Commun. Math. Phys.}
  {\bf 295} (2010)  701--729},
\href{http://arxiv.org/abs/0902.0575}{{\tt arXiv:0902.0575 [hep-th]}}.

\bibitem{wroclaw_XXV}
J.~Kowalski-Glikman, R.~Durka, and M.~Szcz\k{a}chor, eds., {\em The Planck
  Scale. Proceedings of the XXV Max Born Symposium, Wroc\l{a}w, 29 June~--~3
  July 2009}, no.~1196 in Conference Proceedings.
\newblock AIP, Melville, New York, 2010.

\end{thebibliography}\endgroup

\end{document}